\newcommand{\algprobm}[1]{\textsc{#1}\xspace}
\theoremstyle{plain}
\newtheorem{theorem}{Theorem}[section]
\newtheorem{corollary}[theorem]{Corollary}
\newtheorem{lemma}[theorem]{Lemma}
\theoremstyle{definition}
\newtheorem{definition}[theorem]{Definition}
\newtheorem{remark}[theorem]{Remark}
\newtheorem{question}[theorem]{Question}\newtheorem{algorithm}[theorem]{Algorithm}
\newcommand{\Lem}[1]{Lemma~\ref{#1}\xspace}
\newcommand{\Cor}[1]{Corollary~\ref{#1}\xspace}
\newcommand{\Thm}[1]{Theorem~\ref{#1}\xspace}
\DeclareMathOperator{\rw}{rw}
\DeclareMathOperator{\rk}{rk}
\title{Canonizing Graphs of Bounded Rank-Width in Parallel via Weisfeiler--Leman\footnote{This work was completed in part at the 2022 Graduate Research Workshop in Combinatorics, which was supported in part by NSF grant \#1953985 and a generous award from the Combinatorics Foundation. ML was partially supported by J. A. Grochow's NSF award CISE-2047756 and the University of Colorado Boulder, Department of Computer Science Summer Research Fellowship. We are grateful to the anonymous referees for their helpful feedback.} }
\author[1]{Michael Levet}
\author[2]{Puck Rombach}
\author[3]{Nicholas Sieger}
\affil[1]{Department of Computer Science, College of Charleston}
\affil[2]{Department of Mathematics and Statistics, University of Vermont}
\affil[3]{Department of Mathematics, University of California San Diego}
\begin{document}
\maketitle

\begin{abstract}
In this paper, we show that computing canonical labelings of graphs of bounded rank-width is in $\textsf{TC}^{2}$. Our approach builds on the framework of Köbler \& Verbitsky (CSR 2008), who established the analogous result for graphs of bounded treewidth. Here, we use the framework of Grohe \& Neuen (\textit{ACM Trans. Comput. Log.}, 2023) to enumerate separators via \textit{split-pairs} and \textit{flip functions}. In order to control the depth of our circuit, we leverage the fact that any graph of rank-width $k$ admits a rank decomposition of width $\leq 2k$ and height $O(\log n)$ (Courcelle \& Kant\'e, WG 2007). This allows us to utilize an idea from Wagner (CSR 2011) of tracking the depth of the recursion in our computation.

Furthermore, after splitting the graph into connected components, it is necessary to decide isomorphism of said components in $\textsf{TC}^{1}$. To this end, we extend the work of Grohe \& Neuen (\textit{ibid}.) to show that the $(6k+3)$-dimensional Weisfeiler--Leman (WL) algorithm can identify graphs of rank-width $k$ using only $O(\log n)$ rounds. As a consequence, we obtain that graphs of bounded rank-width are identified by $\textsf{FO} + \textsf{C}$ formulas with $6k+4$ variables and quantifier depth $O(\log n)$. 

Prior to this paper, isomorphism testing for graphs of bounded rank-width was not known to be in $\textsf{NC}$.
\end{abstract}

\thispagestyle{empty}

\newpage

\setcounter{page}{1}

\section{Introduction}
\label{sec:introduction}

The \textsc{Graph Isomorphism} problem (\textsc{GI}) takes as input two graphs $G$ and $H$, and asks if there exists an isomorphism $\varphi : V(G) \to V(H)$. $\textsc{GI}$ is in particular conjectured to be $\textsf{NP}$-intermediate; that is, belonging to $\textsf{NP}$ but neither in $\textsf{P}$ nor $\textsf{NP}$-complete \cite{Ladner}. Algorithmically, the best known upper-bound is $n^{\Theta(\log^{2} n)}$, due to Babai \cite{BabaiGraphIso}. It remains open as to whether $\textsc{GI}$ belongs to $\textsf{P}$. There is considerable evidence suggesting that $\textsc{GI}$ is not $\textsf{NP}$-complete \cite{Schoning, BuhrmanHomer, ETH, BabaiGraphIso, GILowPP, ArvindKurur, MATHON1979131}. In a precise sense, \algprobm{GI} sits between linear and multilinear algebra. Recent works \cite{FutornyGrochowSergeichukTensorWild, GrochowQiaoTensors} have established that for any field $\mathbb{F}$, \textsc{GI} belongs to $\mathbb{F}$-$\textsf{Tensor Isomorphism}$ ($\textsf{TI}_{\mathbb{F}}$). When $\mathbb{F}$ is finite, $\textsf{TI}_{\mathbb{F}} \subseteq \textsf{NP} \cap \textsf{coAM}$.\footnote{We refer the reader to \cite[Remark~3.4]{GrochowQiaoTensors} for discussion on $\textsf{TI}_{\mathbb{F}}$ when $\mathbb{F}$ is infinite.} In contrast, the best known lower-bound for \textsc{GI} is $\textsf{DET}$ \cite{Toran}, which contains $\textsf{NL}$ and is a subclass of $\textsf{TC}^{1}$. It is thus natural to inquire as to families of graphs where isomorphism testing is decidable in complexity classes contained within $\textsf{DET}$.

There has been considerable work on efficient algorithms for special families of graphs. Sparse graphs, in particular, have received considerable attention from the perspective of polynomial-time computation, including notably planar graphs \cite{HopcroftWong, KieferPonomarenkoSchweitzer, GroheKieferPlanar, DattaLimayeNimbhorkarPrajaktaThieraufWagner}, graphs of bounded genus, \cite{FilottiMayer, MillerGIBoundedGenus, GroheBook, GroheKieferBoundedGenus}, graphs of bounded treewidth \cite{BodlaenderTreewidth}, all classes excluding a fixed graph as a minor \cite{PonomarenkoMinor, GroheForbiddenMinor} or even a topological subgraph \cite{GroheMarxTopologicalSubgraph}, and graphs of bounded degree \cite{LUKS198242, BabaiKantorLuksBoundedDegree,  BabaiLuksCanonicalLabeling, FurerSchnyderSpecker, GroheNeuenSchweitzerBoundedDegree}. Less is known about dense graphs. Polynomial-time algorithms are known for graphs of bounded eigenvalue multiplicity \cite{BabaiGrigoryevMount}, and certain hereditary graph classes including intersection graphs \cite{CurtisLinMcconnellNussbaum}, graphs excluding specific induced subgraphs \cite{KRATSCH2017240, SchweitzerDichotomy}.

The story is similar in the setting of efficient parallel ($\textsf{NC}$) isomorphism tests. There has been considerable work on $\textsf{NC}$ algorithms for planar graphs (see the references in \cite{DattaLimayeNimbhorkarPrajaktaThieraufWagner}) and graphs of bounded treewidth \cite{GroheVerbitsky, WagnerBoundedTreewidth, DasToranWagner}, culminating in $\textsf{L}$-completeness results for both families (see \cite{DattaLimayeNimbhorkarPrajaktaThieraufWagner, ThieraufWagnerPlanarUL} for planar graphs, and and \cite{ElberfeldSchweitzer} for graphs of bounded treewidth). Isomorphism testing for graphs of bounded genus is also known to be $\textsf{L}$-complete \cite{ElberfeldKawarabayashiGenus}. We now turn to dense graphs. An $\textsf{NC}$ isomorphism test is known for graphs of bounded eigenvalue multiplicity \cite{BabaiLuksSeress}. The isomorphism problems for certain hereditary graph classes, including interval graphs \cite{IntervalGILogspace} and Helly ciruclar-arc graphs \cite{KOBLER2016266} are both $\textsf{L}$-complete.

The $k$-dimensional Weisfeiler--Leman algorithm ($k$-WL) serves as a key combinatorial tool in \textsc{GI}. It works by iteratively coloring $k$-tuples of vertices in an isomorphism-invariant manner. On its own, Weisfeiler--Leman serves as an efficient polynomial-time isomorphism test for several families of graphs, including planar graphs~\cite{KieferPonomarenkoSchweitzer, GroheKieferPlanar}, graphs of bounded genus~\cite{GroheBoundedGenus, GroheKieferBoundedGenus}, and graphs for which a specified minor $H$ is forbidden~\cite{GroheForbiddenMinor}. We also note that $1$-WL identifies almost all graphs~\cite{BabaiKucera, BabaiErdosSelkow} and $2$-WL identifies almost all regular graphs~\cite{BollobasRegular, KuceraRegular}. In the case of graphs of bounded treewidth~\cite{GroheVerbitsky} and planar graphs~\cite{GroheVerbitsky, VerbitskyPlanar,  GroheKieferPlanar}, Weisfeiler--Leman serves even as an $\textsf{NC}$ isomorphism test. Despite the success of WL as an isomorphism test, it is insufficient to place \textsc{GI} into $\textsf{P}$~\cite{CFI, NeuenSchweitzerIR}. Nonetheless, WL remains an active area of research. For instance, Babai's quasipolynomial-time algorithm~\cite{BabaiGraphIso} combines $O(\log n)$-WL with group-theoretic techniques.

Graphs of bounded rank-width have only recently received attention from the perspective of isomorphism testing. Grohe \& Schweitzer \cite{GroheSchweitzerRankWidth} gave the first polynomial-time isomorphism test for graphs of bounded rank-width. In particular, their isomorphism test ran in time $n^{f(k)}$, where $f(k)$ was a non-elementary function of the rank-width $k$. Subsequently, Grohe \& Neuen \cite{grohe2019canonisation} showed that graphs of rank-width $k$ have Weisfeiler--Leman dimension $\leq 3k+5$, which yields an $O(n^{3k+6} \log n)$-time isomorphism test and also the first polynomial-time canonical labeling procedure for this family. In particular, it is open as to whether graphs of bounded rank-width admit $\textsf{NC}$ or $\textsf{FPT}$ isomorphism tests. This is in contrast to graphs of bounded treewidth, where $\textsf{NC}$ \cite{GroheVerbitsky, KoblerVerbitsky, WagnerBoundedTreewidth, DasToranWagner, ElberfeldSchweitzer} and  $\textsf{FPT}$ \cite{LokshtanovTreewidth, GroheNeuenSchweitzerWiebkingTreewidth} isomorphism tests are well-known.

Closely related to \algprobm{Graph Isomorphism} is \algprobm{Graph Canonization}, which for a class $\mathcal{C}$ of graphs, asks for a function $F : \mathcal{C} \to \mathcal{C}$ such that for all $X, Y \in \mathcal{C}$, $X \cong F(X)$ and $X \cong Y \iff F(X) = F(Y)$. \algprobm{Graph Isomorphism} reduces to \algprobm{Graph Canonization}, and the converse remains open. Nonetheless, efficient canonization procedures have often followed efficient isomorphism tests, usually with non-trivial work-- see e.g., \cite{ImmermanLander1990, grohe2019canonisation, KoblerVerbitsky, WagnerBoundedTreewidth, ElberfeldSchweitzer, BabaiQuasipolynomialCanonization}.

\noindent \\ \textbf{Main Results.} In this paper, we investigate the parallel and descriptive complexities of identifying and canonizing graphs of bounded rank-width, using the Weisfeiler--Leman algorithm. 

\begin{theorem} \label{thm:MainCanonicalForms}
Let $G$ be a graph on $n$ vertices, of rank-width $k$. We can compute a canonical labeling for $G$ using a $\textsf{TC}$ circuit of depth $O(\log^{2} n)$ and size $n^{O(16^k)}$. 
\end{theorem}

For $k \in O(1)$, Theorem~\ref{thm:MainCanonicalForms} yields a $\textsf{TC}^{2}$ upper bound. 

\noindent Our approach in proving Theorem~\ref{thm:MainCanonicalForms} was inspired by the previous work of K\"obler \& Verbitsky \cite{KoblerVerbitsky}, who established the analogous result for graphs of bounded treewidth. K\"obler \& Verbitsky crucially utilized the fact that graphs of treewidth $k$ admit \emph{balanced} separators of size $k+1$, where removing such a separator leaves connected components each of size $\leq n/2$. This ensures that the height of their recursion tree is $O(\log n)$. 

For graphs of bounded rank-width, we are unable to identify such separators. Instead, we leverage the framework of Grohe \& Neuen to descend along a rank decomposition, producing a canonical labeling along the way. To ensure that our choices are canonical, we utilize the Weisfeiler--Leman algorithm. As a first step, we will establish the following:

\begin{theorem} \label{thm:MainParallel1}
The $(6k+3)$-dimensional Weisfeiler--Leman algorithm identifies graphs of rank-width $k$ in $O(\log n)$ rounds.
\end{theorem}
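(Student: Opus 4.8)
\quad
The plan is to recast ``identification in $r$ rounds'' as a bounded-length bijective pebble game and to hand Spoiler a divide-and-conquer strategy driven by a \emph{balanced} rank decomposition of $G$. Recall that $d$-dimensional Weisfeiler--Leman run for $r$ rounds has, up to the usual $\pm 1$ in the parameter and a constant factor in the number of rounds, the same power on vertex-colored graphs as the bijective $(d+1)$-pebble game played for $\Theta(r)$ rounds; hence it suffices to show that whenever $G\not\cong H$, Spoiler wins the bijective $(6k+4)$-pebble game on $(G,H)$ within $O(\log n)$ rounds, $n=|V(G)|$. Fix a rank decomposition $T$ of $G$ of width $k$, i.e.\ a subcubic tree with leaf set $V(G)$ in which each edge $e$ induces a partition $(A_e,\overline{A_e})$ with $\operatorname{cutrk}(A_e):=\rk_{\F_2}\!\big(M[A_e,\overline{A_e}]\big)\le k$, where $M$ is the adjacency matrix of $G$. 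Two elementary facts drive everything: \emph{(balance)} $T$ has a centroid edge, one with both sides of size $\ge n/3$; and \emph{(heredity)} deleting any single edge of $T$ produces two subtrees whose leaf sets $X,\,V(G)\setminus X$ satisfy $\operatorname{cutrk}(X)\le k$, and each such subtree, read as a rank decomposition of the induced subgraph, again has width $\le k$. Iterating (balance) inside the subtrees yields a recursion tree of depth $O(\log n)$ all of whose parts $W$ are leaf sets of subtrees of the one fixed $T$, and therefore --- this is the point of (heredity) --- always have $\operatorname{cutrk}(W)\le k$ \emph{to the whole of} $V(G)$, never more.

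Next comes the linear algebra that lets the pebble game follow a single split. If $(W_1,W_2)$ has cut-rank $r\le k$, choose a row basis $a_1,\dots,a_r\in W_1$ and a column basis $b_1,\dots,b_r\in W_2$ of $M[W_1,W_2]$ for which the submatrix $N:=M[\{a_i\},\{b_j\}]$ is invertible over $\F_2$; then $M[v,w]=\sigma(v)\,N^{-1}\sigma(w)^{\top}$ for all $v\in W_1$, $w\in W_2$, where $\sigma(v)=(M[v,b_1],\dots,M[v,b_r])$ and $\sigma(w)=(M[a_1,w],\dots,M[a_r,w])$. Once Spoiler has pebbled $a_1,\dots,a_r,b_1,\dots,b_r$ together with the entries of $N$, Duplicator's bijection $f$ must preserve adjacency of \emph{every} vertex to all of these pebbled vertices and must preserve $N$ (otherwise Spoiler wins at once); hence $\sigma$ is $f$-invariant and the identity for $M[v,w]$ forces the cross-edges between $W_1$ and $W_2$ in $G$ to agree with those between $f(W_1)$ and $f(W_2)$ in $H$. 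Consequently, if $G[W_i]$ colored by adjacency to the currently pebbled set is isomorphic to the corresponding colored subgraph of $H$ for each $i\in\{1,2\}$, then so is $G[W_1\cup W_2]$ colored the same way; contrapositively, non-isomorphism of the colored whole descends to non-isomorphism of one of the two colored parts.

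The recursive Spoiler strategy maintains the invariant that at most $3k$ pebbles are down witnessing $\operatorname{cutrk}(W)\le k$ for the current part $W$ (a pebble-determined set), and that $G[W]$ is not isomorphic, as a colored graph, to the corresponding pebble-determined part $W'$ of $H$. To process one level, Spoiler spends $O(k)$ rounds pebbling a row/column basis together with $N$ for the centroid split $(W_1,W_2)$ of $W$, plus the $O(k)$ further ``addressing'' vertices needed to render $W_1$ and $W_2$ pebble-determined; by the previous paragraph the non-isomorphism then descends to one of $G[W_1],G[W_2]$, and by (heredity) the offending side again has cut-rank $\le k$ to all of $V(G)$, so Spoiler can re-witness its boundary --- reusing the pebbled column bases of the parent and the split, which span the requisite column space --- and release all but $3k$ pebbles before descending. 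Since $|W|$ drops by a constant factor each level, after $O(\log n)$ levels we reach $|W|=O(k)$, at which point Spoiler pebbles all of $W$ and wins. The running total is $O(\log n)$ rounds; at no moment are more than two cut-witnesses simultaneously live --- the one inherited from the parent and the freshly created centroid split, each costing $\le 3k$ pebbles in the accounting of Grohe \& Neuen~\cite{grohe2019canonisation} --- so $6k+4$ pebbles, i.e.\ WL dimension $6k+3$, suffice.

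The chief obstacle is making the centroid split \emph{addressable}: Spoiler pebbles only an $O(k)$-element basis, not all of $W_1$, so for the descent step to bite one needs the refined coloring (after those pebbles are placed and only $O(1)$ further rounds) to determine membership in $W_1$ versus $W_2$. This is precisely the canonical rank-decomposition definability of Grohe \& Neuen~\cite{grohe2019canonisation}, but read naively their construction produces the right $O(k)$-variable formulas of \emph{super-constant} quantifier depth; the crux is to reorganize it so that exposing one level of the balanced decomposition costs only $O(1)$ rounds, and to verify this stays within the pebble budget when a parent cut-witness and a child cut-witness are both present. The secondary, more routine obstacle is the bookkeeping confirming that the boundary witness never accumulates beyond $O(k)$ vertices across all $O(\log n)$ levels; this is exactly what (heredity) buys, since every part is the leaf set of a subtree of the single fixed decomposition $T$, so its cut-rank to the rest of $G$ stays $\le k$ and the witnesses from shallower levels can be discarded rather than kept.
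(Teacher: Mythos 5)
Your high-level plan (descend a logarithmic-depth decomposition, pebble bases of the cut matrices so that cross-edges become determined, reuse pebbles between parent and child cuts) is the right shape, but there are two genuine gaps, and the first one is fatal as written. Your ``heredity'' claim --- that iterating centroid splits inside subtrees of a width-$k$ decomposition $T$ keeps every part $W$ at cut-rank $\le k$ relative to all of $V(G)$ --- is false. After one centroid split of a subtree you obtain one part that is a pendant subtree of $T$ (fine, cut-rank $\le k$), but the complementary part is a connected subtree of $T$ whose boundary consists of \emph{two} edges of $T$; its cut-rank is only bounded by the sum of the two edge cut-ranks, i.e.\ $2k$, and without careful bookkeeping the number of boundary edges grows with the recursion depth. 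The correct statement is the theorem of Courcelle \& Kant\'e that the paper invokes: every graph of rank width $k$ has a decomposition of width at most $2k$ and height $3(\log n + 1)$. Whether width $k$ itself can be achieved at logarithmic height is precisely the open question posed in the Conclusion, so your argument silently assumes something stronger than what is known. This also undermines your pebble count: the $6k$ in the paper's bound arises as $3\times 2k$ from sharing \emph{nice} split pairs of a width-$2k$ decomposition across a parent and its two children (\Lem{lem:ExistNice}), not from two width-$k$ witnesses of $3k$ pebbles each.

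The second gap you flag yourself: making $W_1$ versus $W_2$ ``addressable'' from only the $O(k)$ pebbled basis vertices within $O(1)$ rounds per level. You leave this as ``the crux,'' but it is the main technical content of the proof. The paper's resolution is the flip-function machinery of Grohe \& Neuen: after individualizing a split pair, two rounds of Color Refinement suffice for the flip function to make every component of the flipped graph lie entirely inside $X$ or inside $\overline{X}$ (\Lem{lem:GN3.6}), and isomorphism is preserved under flipping (\Lem{lem:FlipIsomorphism}). Crucially, Spoiler does \emph{not} pay $O(\log n)$ rounds per level to identify components; he proceeds as if components were distinguished, and only if Duplicator ever plays a bijection violating a component does Spoiler cash in the Connectivity Lemma (\Lem{lem:Connectivity}), which costs $O(\log n)$ rounds but terminates the game --- so it is paid at most once. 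Your proposal has no substitute for this one-shot accounting, and without it the naive cost is $O(\log n)$ rounds per level, i.e.\ $O(\log^2 n)$ total.
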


Combining \Thm{thm:MainParallel1} with the parallel WL implementation of Grohe \& Verbitsky \cite{GroheVerbitsky}, we obtain the first $\textsf{NC}$ bound for isomorphism testing of graphs of bounded rank-width. This is a crucial ingredient in obtaining the $\textsf{TC}^{2}$ bound for \Thm{thm:MainCanonicalForms}.

\begin{corollary}
Let $G$ be a graph of rank-width $O(1)$, and let $H$ be arbitrary. We can decide isomorphism between $G$ and $H$ in $\textsf{TC}^{1}$. 
\end{corollary}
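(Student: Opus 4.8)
The plan is to combine Theorem~\ref{thm:MainParallel1} with the parallelization of the Weisfeiler--Leman algorithm due to Grohe \& Verbitsky~\cite{GroheVerbitsky}. Recall that their result shows that one round of the $d$-dimensional WL algorithm can be implemented by a logspace-uniform $\textsf{TC}^{0}$ circuit (the threshold gates are needed to count the multiset of colors appearing in each neighborhood), where $d$ is a fixed constant. Consequently, $r$ rounds of $d$-WL can be simulated by a logspace-uniform circuit of depth $O(r)$ built from $\textsf{TC}^{0}$ layers, and comparing the resulting stable colorings of the two input graphs (testing whether the two color-multisets agree) is again an $\textsf{TC}^{0}$ operation.

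First I would fix $k = O(1)$ so that $d := 6k+3$ is a constant, and let $G$ be the given graph of rank width $k$ and $H$ an arbitrary graph on $n$ vertices. By Theorem~\ref{thm:MainParallel1}, running $d$-WL for $r = O(\log n)$ rounds produces a coloring of $d$-tuples that distinguishes $G$ from every non-isomorphic $H$; that is, $G \cong H$ if and only if the stable (after $O(\log n)$ rounds) colorings of $G$ and $H$ induce the same multiset of colors on $V(G)^d$ and $V(H)^d$. Since we only need $O(\log n)$ rounds rather than running to stabilization, we do not incur the naive polynomial round bound.

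Next I would assemble the circuit: stack $O(\log n)$ copies of the Grohe--Verbitsky $\textsf{TC}^{0}$ round-circuit, followed by one more $\textsf{TC}^{0}$ block that sorts and compares the two final color-multisets. The total depth is $O(\log n) \cdot O(1) = O(\log n)$, the gates are (majority/threshold plus Boolean), and logspace-uniformity is preserved because each layer is a uniform copy of the same gadget with an easily computable wiring pattern. This is exactly a logspace-uniform $\textsf{TC}^{1}$ circuit family, which decides isomorphism between $G$ and $H$.

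The one point requiring care -- and the only real obstacle -- is that Theorem~\ref{thm:MainParallel1} is an \emph{identification} statement: it guarantees that $d$-WL distinguishes $G$ from all non-isomorphic graphs, but it presupposes that (at least) one of the two inputs has bounded rank width. So the resulting procedure is not a symmetric isomorphism test for arbitrary pairs; it is a test for "$G \cong H$" when $G$ is promised to have rank width $O(1)$, which is precisely what the Corollary asserts. (One could additionally note that bounded rank width is itself decidable in the relevant complexity class, or simply treat it as a promise; either way the circuit above always halts with a well-defined yes/no answer, and that answer is correct whenever the promise on $G$ holds.) With that caveat handled, the depth and uniformity bookkeeping is routine, and the Corollary follows.
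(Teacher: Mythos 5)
Your proposal is correct and is exactly the argument the paper intends: the corollary is stated as an immediate consequence of Theorem~\ref{thm:MainParallel1} together with the Grohe--Verbitsky implementation of each WL round as a logspace-uniform $\textsf{TC}^{0}$ circuit, stacked for $O(\log n)$ rounds. Your additional remark that the statement is a promise-style identification result (one input must have bounded rank width) is a correct and worthwhile clarification, but it does not change the approach.
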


Furthermore, in light of the close connections between Weisfeiler--Leman and $\textsf{FO} + \textsf{C}$ \cite{ImmermanLander1990, CFI}, we obtain the following corollary. Let $\mathcal{C}_{m,r}$ denote the $m$-variable fragment of $\textsf{FO} + \textsf{C}$ where the formulas have quantifier depth at most $r$ (see Sec.~\ref{sec:Logics}).

\begin{corollary}
For every graph $G$ of rank-width at most $k$, there is a sentence $\varphi_{G} \in \mathcal{C}_{6k+4, O(\log n)}$ that characterizes $G$ up to isomorphism. That is, whenever $H \not \cong G$, we have that $G \models \varphi_{G}$ and $H \not \models \varphi_{G}$.
\end{corollary}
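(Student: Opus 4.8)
The plan is to derive the corollary from \Thm{thm:MainParallel1} via the standard correspondence between Weisfeiler--Leman and the finite-variable counting logic $\mathcal{C}$ due to Cai, F\"urer \& Immerman \cite{CFI} and Immerman \& Lander \cite{ImmermanLander1990}, taking care to track the translation from WL rounds to quantifier depth. Recall that $m$-WL refines colors of $m$-tuples by a rule that, at each round, compares for each coordinate the vertex-counts of each color reachable by a one-point substitution; this rule is expressible by a single counting quantifier applied to the formulas describing the previous round's colors, using only $m+1$ variables (with coordinate-cycling to stay within the budget). Here $m = 6k+3$, so the relevant logic has $6k+4$ variables.

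First I would construct, by induction on the round index $i$, for each color $c$ in the image of the $i$-round $(6k+3)$-WL coloring a formula $\chi_c^{(i)}(\bar{x})$ in $\mathcal{C}_{6k+4}$ of quantifier depth $O(i)$ such that a $(6k+3)$-tuple of an arbitrary graph satisfies $\chi_c^{(i)}$ exactly when it receives color $c$ after $i$ rounds. The base case records the atomic (isomorphism) type of the tuple and uses no quantifiers; the inductive step applies one layer of counting quantifiers encoding the refinement rule on top of the depth-$O(i-1)$ formulas, so the depth grows by a constant per round.

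Next, let $r = O(\log n)$ be the round bound from \Thm{thm:MainParallel1} after which the $(6k+3)$-WL coloring identifies $G$, and set
\[
\varphi_G \;:=\; \bigwedge_{c}\, \exists^{= n_c}\, \bar{x}\; \chi_c^{(r)}(\bar{x}),
\]
where the conjunction ranges over the finitely many colors $c$ appearing on $G$ and $n_c$ is the number of $(6k+3)$-tuples of $G$ of color $c$. This sentence has $6k+4$ variables and quantifier depth $O(\log n) + O(1) = O(\log n)$; its length may be polynomial in $|V(G)|$, but length is unconstrained in $\mathcal{C}_{6k+4, O(\log n)}$. Then $G \models \varphi_G$ by construction, and any model $H$ of $\varphi_G$ has $\sum_c n_c = |V(H)|^{6k+3}$ tuples (so $|V(H)| = |V(G)|$) and the same $(6k+3)$-WL color-count profile as $G$ after $r$ rounds, whence \Thm{thm:MainParallel1} forces $H \cong G$.

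The main obstacle is bookkeeping rather than insight: one must confirm that the inductive construction lives within $6k+4$ variables — reusing a coordinate variable as the quantified vertex in the refinement step, as in the usual pebble-game/logic dictionary — and that the depth bound is genuinely additive in the round count, so that the $O(\log n)$-round guarantee of \Thm{thm:MainParallel1} translates to $O(\log n)$ quantifier depth with the constants made explicit. Since the logarithmic-depth regime is precisely the point of this paper, I would spell out these constants rather than cite the translation as a black box, and I would remark that the converse direction (sentences of depth $O(r)$ are preserved by $r$ rounds of WL), though not needed here, is what makes the depth bound tight.
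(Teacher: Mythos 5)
Your route is the same as the paper's: the corollary is obtained by applying the standard Immerman--Lander / Cai--F\"urer--Immerman translation between $(m,r)$-WL and $\mathcal{C}_{m+1,\,r+O(1)}$ to Theorem~\ref{thm:MainParallel1} with $m = 6k+3$, and the paper simply cites this correspondence, so your explicit unfolding (color-defining formulas of depth $O(i)$ per round, then a conjunction of exact-count statements over $G$'s color profile) is the intended argument. One small repair: as written, $H \models \varphi_{G}$ only forces $H$ to contain exactly $n_{c}$ tuples of each color $c$ that \emph{appears in $G$}; it does not exclude tuples of $H$ realizing colors absent from $G$, so the step ``$\sum_{c} n_{c} = |V(H)|^{6k+3}$, hence $|V(H)| = |V(G)|$'' does not follow from the sentence as stated. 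Adding a conjunct that fixes the number of vertices to $|V(G)|$ via the exact-counting quantifier (or a clause asserting that every tuple satisfies some $\chi_{c}^{(r)}$ from the list) closes this, after which the $r$-round color profiles of $G$ and $H$ coincide and Theorem~\ref{thm:MainParallel1} yields $H \cong G$.
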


We will discuss shortly the proof technique for \Thm{thm:MainParallel1}. Let us now discuss how we will utilize \Thm{thm:MainParallel1} to establish \Thm{thm:MainCanonicalForms}. As $(6k+3)$-WL identifies all graphs of rank-width $\leq k$ in $O(\log n)$ rounds, $(10k+3)$-WL identifies the orbits of sequences of vertices of length $\leq 4k$ (see Lemma~\ref{lem:Orbits}). By applying $(10k+3)$-WL for $O(\log n)$ rounds at each recursive call to our canonization procedure, we thereby give canonical labelings to the various parallel choices considered by the algorithm.  While there exists a suitable rank decomposition of height $O(\log n)$ \cite{CourcelleKante2007}, it is open whether such a decomposition can be computed in $\textsf{NC}$ \cite{DasAnirbanMuraliReddy}. Instead of explicitly constructing a rank decomposition, we instead track the depth of our recursion tree. By leveraging the framework of Grohe \& Neuen \cite{grohe2019canonisation}, we are able to show that one of our parallel computations witnesses the balanced rank decomposition of Courcelle \& Kant\'e \cite{CourcelleKante2007}.

We will now outline the proof strategy for \Thm{thm:MainParallel1}. Our work follows closely the strategy of Grohe \& Neuen \cite{grohe2019canonisation}. We again combine the balanced rank decomposition of Courcelle \& Kant\'e \cite{CourcelleKante2007} with a careful analysis of the pebbling strategy of Grohe \& Neuen \cite{grohe2019canonisation}. In parts of their argument, Grohe \& Neuen utilize (an analysis of) the stable coloring of $1$-WL. 

For a graph $G$, Grohe \& Neuen \cite[Section~3]{grohe2019canonisation} construct an auxiliary graph that they call the \textit{flipped graph}, whose construction depends on a specified set of vertices called a \textit{split pair} and a coloring of the vertices. While the flipped graph is compatible with any vertex coloring, Grohe \& Neuen \cite[Lemma~3.6]{grohe2019canonisation} crucially utilize the \textit{stable coloring} of $1$-WL to show that WL can detect which edges are present in the flipped graph. Even though we allow for higher-dimensional WL, the restriction of $O(\log n)$ rounds creates a technical difficulty in determining whether WL detects the key properties needed to adapt \cite[Lemma~3.6]{grohe2019canonisation}.

To resolve this issue, we consider a \textit{different} notion of flipped graph-- namely, a vertex colored variant introduced in \cite[Section~5]{grohe2019canonisation}. In this second definition, edges of the flipped graph no longer depend on the coloring and merely depend on the split pair. Grohe \& Neuen established \cite[Lemma~5.6]{grohe2019canonisation}, which is analogous to their Lemma~3.6. The proof of their Lemma~5.6 depends only on the structure of the graph and \textit{not} the vertex colorings. In particular, Weisfeiler--Leman can take advantage of \cite[Lemma~5.6]{grohe2019canonisation} within $O(\log n)$ iterations.

The second such place where Grohe \& Neuen rely on the stable coloring of $1$-WL to detect the connected components of the flipped graph. We will show that $2$-WL can in fact identify these components in $O(\log n)$ rounds. As we are controlling for rounds rather than using the stable coloring, we can further reduce the round complexity via a simple observation. In the pebble game game characterization of WL, if Duplicator fails to respect connected components of the flipped graph, Spoiler can win in $O(\log n)$ rounds \textit{without descending down the rank decomposition}. If Duplicator does respect the connected components, Spoiler only needs a constant number of rounds to descend to a child node in the rank decomposition. In either case, we only need $O(\log n)$ rounds total for Spoiler to win and distinguish the graphs. Thus we reduce the round complexity of WL to $O(\log n)$, which yields a $\textsf{TC}^1$ isomorphism test.

In the process of our work, we came across a result of Bodlaender \cite{Bodlaender}, who showed that any graph of treewidth $k$ admits a \emph{binary} tree decomposition of width $\leq 3k+2$ and height $O(\log n)$. Using Bodlaender's result, we were able to modestly improve the descriptive complexity for graphs of bounded treewidth. 

\begin{theorem} \label{thm:Treewidth}
The $(3k+6)$-dimensional Weisfeiler--Leman algorithm identifies graphs of treewidth $k$ in $O(\log n)$ rounds.
\end{theorem}

In light of the above theorem, we obtain the following improvement in the descriptive complexity for graphs of bounded treewidth.

\begin{corollary} \label{cor:CorTreewidth}
Let $G$ be a graph of treewidth $k$. Then there exists a formula $\varphi_{G} \in \mathcal{C}_{3k+7, O(\log n)}$ that identifies $G$ up to isomorphism. That is, for any $H \not \cong G$, $G \models \varphi_{G}$ and $H \not \models \varphi_{G}$.
\end{corollary}

\section{Preliminaries}

\subsection{Computational Complexity}
We assume familiarity with Turing machines and the complexity classes $\textsf{P}, \textsf{NP}, \textsf{L}$, and $\textsf{NL}$--- we refer the reader to standard references \cite{ComplexityZoo, AroraBarak}. For a standard reference on circuit complexity, see \cite{VollmerText}. We consider Boolean circuits using the \textsf{AND}, \textsf{OR}, \textsf{NOT}, and \textsf{Majority}, where $\textsf{Majority}(x_{1}, \ldots, x_{n}) = 1$ if and only if greater than or equal to $n/2$ of the inputs are $1$. Otherwise, $\textsf{Majority}(x_{1}, \ldots, x_{n}) = 0$. In this paper, we will consider \textit{logspace uniform} circuit families $(C_{n})_{n \in \mathbb{N}}$, in which a deterministic logspace Turing machine can compute the map $1^{n} \mapsto \langle C_{n} \rangle$ (here, $\langle C_{n} \rangle$ denotes an encoding of the circuit $C_{n}$).

\begin{definition}
Fix $k \geq 0$. We say that a language $L$ belongs to (logspace uniform) $\textsf{NC}^{k}$ if there exist a (logspace uniform) family of circuits $(C_{n})_{n \in \mathbb{N}}$ over the $\textsf{AND}, \textsf{OR}, \textsf{NOT}$ gates such that the following hold:
\begin{itemize}
\item The $\textsf{AND}$ and $\textsf{OR}$ gates take exactly $2$ inputs. That is, they have fan-in $2$.
\item $C_{n}$ has depth $O(\log^{k} n)$ and uses (has size) $n^{O(1)}$ gates. Here, the implicit constants in the circuit depth and size depend only on $L$.

\item $x \in L$ if and only if $C_{|x|}(x) = 1$. 
\end{itemize}
\end{definition}

\noindent The complexity class $\textsf{AC}^{k}$ is defined analogously as $\textsf{NC}^{k}$, except that the $\textsf{AND}, \textsf{OR}$ gates are permitted to have unbounded fan-in. That is, a single $\textsf{AND}$ gate can compute an arbitrary conjunction, and a single $\textsf{OR}$ gate can compute an arbitrary disjunction. The complexity class $\textsf{TC}^{k}$ is defined analogously as $\textsf{AC}^{k}$, except that our circuits are now permitted $\textsf{Majority}$ gates of unbounded fan-in. For every $k$, the following containments are well-known:
\[
\textsf{NC}^{k} \subseteq \textsf{AC}^{k} \subseteq \textsf{TC}^{k} \subseteq \textsf{NC}^{k+1}.
\]

\noindent In the case of $k = 0$, we have that:
\[
\textsf{NC}^{0} \subsetneq \textsf{AC}^{0} \subsetneq \textsf{TC}^{0} \subseteq \textsf{NC}^{1} \subseteq \textsf{L} \subseteq \textsf{NL} \subseteq \textsf{AC}^{1}.
\]

\noindent We note that functions that are $\textsf{NC}^{0}$-computable can only depend on a bounded number of input bits. Thus, $\textsf{NC}^{0}$ is unable to compute the $\textsf{AND}$ function. It is a classical result that $\textsf{AC}^{0}$ is unable to compute \algprobm{Parity} \cite{Smolensky87algebraicmethods}. The containment $\textsf{TC}^{0} \subseteq \textsf{NC}^{1}$ (and hence, $\textsf{TC}^{k} \subseteq \textsf{NC}^{k+1}$) follows from the fact that $\textsf{NC}^{1}$ can simulate the \textsf{Majority} gate. The class $\textsf{NC}$ is:
\[
\textsf{NC} := \bigcup_{k \in \mathbb{N}} \textsf{NC}^{k} = \bigcup_{k \in \mathbb{N}} \textsf{AC}^{k} = \bigcup_{k \in \mathbb{N}} \textsf{TC}^{k}.
\]

\noindent It is known that $\textsf{NC} \subseteq \textsf{P}$, and it is believed that this containment is strict.

\subsection{Weisfeiler--Leman}

We begin by recalling the Weisfeiler--Leman algorithm for graphs, which computes an isomorphism-invariant coloring. Let $G$ be a graph on $n$ vertices, let $\chi:V(G)\to [n]$ be a coloring of the vertices, and let $k \geq 2$ be an integer. The $k$-dimensional Weisfeiler--Leman, or $k$-WL, algorithm begins by constructing an initial coloring $\chi_{0} : V(G)^{k} \to \mathcal{K}$, where $\mathcal{K}$ is our set of colors, by assigning each $k$-tuple a color based on its isomorphism type under the coloring $\chi$. Note that for $k$-WL applied to two graphs $G$ and $H$, each of order $n$, there are at most $2n^{k}$ color classes. So without loss of generality, we may take $\mathcal{K} = [2n^{k}]$. Two $k$-tuples $(v_{1}, \ldots, v_{k})\in V(G)^k$ and $(u_{1}, \ldots, u_{k})\in V(G)^k$ receive the same color under $\chi_{0}$ if and only if the following conditions all hold:
\begin{itemize}
    \item For all $i, j$, $v_i = v_j \Leftrightarrow u_i = u_j$. 
    \item The map $v_i \mapsto u_i$ (for all $i \in [k]$) is an isomorphism of the induced subgraphs $G[\{ v_{1}, \ldots, v_{k}\}]$ and $G[\{u_{1}, \ldots, u_{k}\}]$
    \item $\chi(u_i) = \chi(v_i)$ for all $i\in [k]$.
\end{itemize}For $r \geq 0$, the coloring computed at the $r$th iteration of Weisfeiler--Leman is refined as follows. For a $k$-tuple $\overline{v} = (v_{1}, \ldots, v_{k})$ and a vertex $x \in V(G)$, define
\[
\overline{v}(v_{i}/x) = (v_{1}, \ldots, v_{i-1}, x, v_{i+1}, \ldots, v_{k}).
\]

\noindent The coloring computed at the $(r+1)$st iteration, denoted $\chi_{r+1}$, stores the color of the given $k$-tuple $\overline{v}$ at the $r$th iteration, as well as the colors under $\chi_{r}$ of the $k$-tuples obtained by substituting a single vertex in $\overline{v}$ for another vertex $x$. We examine this multiset of colors over all such vertices $x$. This is formalized as follows:
\begin{align*}
\chi_{r+1}(\overline{v}) = &( \chi_{r}(\overline{v}), \{\!\!\{ ( \chi_{r}(\overline{v}(v_{1}/x)), \ldots, \chi_{r}(\overline{v}(v_{k}/x) ) \bigr| x \in V(G) \}\!\!\} ),
\end{align*}
where $\{\!\!\{ \cdot \}\!\!\}$ denotes a multiset. Note that the coloring $\chi_{r}$ computed at iteration $r$ induces a partition of $V(G)^{k}$ into color classes. The Weisfeiler--Leman algorithm terminates when this partition is not refined, that is, when the partition induced by $\chi_{r+1}$ is identical to that induced by $\chi_{r}$. The final coloring is referred to as the \textit{stable coloring}, which we denote $\chi_{\infty} := \chi_{r}$.

The $1$-dimensional Weisfeiler--Leman algorithm, sometimes referred to as \textit{Color Refinement}, works nearly identically. The initial coloring is that provided by the vertex coloring for the input graph. For the refinement step, we have that:
\[
\chi_{r+1}(u) = (\chi_{r}(u), \{\!\!\{ \chi_{r}(v) : v \in N(u) \}\!\!\} ).
\]

\noindent We have that $1$-WL terminates when the partition on the vertices is not refined.

As we are interested in both the Weisfeiler--Leman dimension and the number of rounds, we will use the following notation.

\begin{definition}
Let $k \geq 1$ and $r \geq 1$ be integers. The $(k, r)$-WL  algorithm is obtained by running $k$-WL for $r$ rounds. Here, the initial coloring counts as the first round.
\end{definition}

Let $S$ be a sequence of vertices. The \textit{individualize-and-refine} paradigm works first by assigning each vertex in $S$ a unique color. We then run $(k,r)$-WL starting from this choice of initial coloring. We denote the coloring computed by $(k,r)$-WL after individualizing $S$ as $\chi_{k,r}^{S}$. When there is ambiguity about the graph $G$ in question, we will for clarity write $\chi_{k,r}^{S,G}$.

For two graphs $G$ and $H$, we say that $(k,r)$-WL
\textit{distinguishes} $G$ and $H$ if there is some color $c$ such that the sets
\[
    |\{v\in V(G)^k: \chi_{G,k,r}(v) = c\}| \neq |\{w\in V(H)^k: \chi_{H,k,r}(w) = c\}|.
\] We write $G \equiv_{k,r} H$ if $(k,r)$-WL does not distinguish between G and H. Additionally, $(k,r)$-WL \textit{identifies} a graph $G$ if $(k,r)$-WL distinguishes $G$ from every graph $H$ such that $G\not\cong H$.

\begin{remark}
Grohe \& Verbitsky \cite{GroheVerbitsky} previously showed that for fixed $k$, the classical $k$-dimensional Weisfeiler--Leman algorithm for graphs can be effectively parallelized. Precisely, each iteration (including the initial coloring) can be implemented using a logspace uniform $\textsf{TC}^{0}$ circuit. 
\end{remark}

\subsection{Pebbling Game}

We recall the bijective pebble game introduced by \cite{Hella1989, Hella1993} for WL on graphs. This game is often used to show that two graphs $X$ and $Y$ cannot be distinguished by $k$-WL. The game is an Ehrenfeucht--Fra\"iss\'e game (c.f., \cite{Ebbinghaus:1994, Libkin}), with two players: Spoiler and Duplicator. We begin with $k+1$ pairs of pebbles. Prior to the start of the game, each pebble pair $(p_{i}, p_{i}')$ is initially placed either beside the graphs or on a given pair of vertices $v_{i} \mapsto v_{i}'$ (where $v_{i} \in V(X), v_{i}' \in V(Y)$). We refer to this initial configuration for $X$ as $\overline{v}$, and this initial configuration for $Y$ as $\overline{v'}$.  Each round $r$ proceeds as follows.
\begin{enumerate}
\item Spoiler picks up a pair of pebbles $(p_{i}, p_{i}^{\prime})$. 
\item Duplicator chooses a bijection $f_{r} : V(X) \to V(Y)$ (we emphasize that the bijection chosen depends on the round and, implicitly, the pebbling configuration at the start of said round).
\item Spoiler places $p_{i}$ on some vertex $v \in V(X)$. Then $p_{i}^{\prime}$ is placed on $f(v)$. 
\end{enumerate} 

Let $v_{1}, \ldots, v_{m}$ be the vertices of $X$ pebbled at the end of round $r$ of the game, and let $v_{1}^{\prime}, \ldots, v_{m}^{\prime}$ be the corresponding pebbled vertices of $Y$. Spoiler wins precisely if the map $ v_{\ell} \mapsto v_{\ell}^{\prime}$ is not an isomorphism of the induced subgraphs $X[\{v_{1}, \ldots, v_{m}\}]$ and $Y[\{v_{1}^{\prime}, \ldots, v_{m}^{\prime}\}]$. Duplicator wins otherwise. Spoiler wins, by definition, at round $0$ if $X$ and $Y$ do not have the same number of vertices. We note that $\overline{v}$ and $\overline{v'}$ are not distinguished by the first $r$ rounds of $k$-WL if and only if Duplicator wins the first $r$ rounds of the $(k+1)$-pebble game \cite{Hella1989, Hella1993, CFI}.

We establish a helper lemma, which effectively states that Duplicator must respect connected components of pebbled vertices.

\begin{lemma} \label{lem:Connectivity}
Let $G, H$ be graphs on $n$ vertices. Suppose that $(u, v) \mapsto (u', v')$ have been pebbled. Furthermore, suppose that $u, v$ belong to the same connected component of $G$, while $u', v'$ belong to different connected components of $H$. Then Spoiler can win using $1$ additional pebble and $O(\log n)$ rounds.
\end{lemma}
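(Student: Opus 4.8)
The plan is for Spoiler to binary-search along a shortest $u$--$v$ path in $G$, using a single extra, reusable pebble pair to repeatedly bisect that path. A naive strategy that walks a pebble one vertex at a time along the path would also win, but only after $\Theta(n)$ rounds; bisecting the path brings this down to $O(\log n)$.

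Fix a shortest path $u = x_{0}, x_{1}, \dots, x_{\ell} = v$ in $G$, so $\ell = \dist_{G}(u,v) \leq n-1$. Spoiler will maintain the invariant that two ``active'' pebble pairs, say $a \mapsto a'$ and $b \mapsto b'$, together with one spare pair (the single additional pebble the lemma allows), are such that $a$ and $b$ lie on a common shortest path of $G$ of length $\ell_{i}$, while $a'$ and $b'$ lie in distinct connected components of $H$. Initially $(a, a', b, b') = (u, u', v, v')$ and $\ell_{0} = \ell$, which is precisely the hypothesis.

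One round of the reduction: Spoiler picks up the spare pair; Duplicator names a bijection $f$; Spoiler places the spare pair on the midpoint $w := x_{\lceil \ell_{i}/2 \rceil}$ of the current shortest $a$--$b$ path, so that $w \mapsto w' := f(w)$. Since $w'$ lies in a single component of $H$ whereas $a'$ and $b'$ lie in two different components, $w'$ must lie in a different component from at least one of $a'$, $b'$ --- say from $a'$, the other case being symmetric. Spoiler now treats $a \mapsto a'$ and $w \mapsto w'$ as the active pairs and the pair on $b \mapsto b'$ as the new spare: indeed $a$ and $w$ lie on the relevant sub-path of the old $a$--$b$ path, of length $\leq \lceil \ell_{i}/2 \rceil =: \ell_{i+1}$, and $a', w'$ lie in distinct components of $H$, so the invariant is restored with $\ell_{i+1} \leq \lceil \ell_{i}/2 \rceil$. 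After $O(\log n)$ such rounds we reach $\ell_{i} \leq 1$.

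To finish, Spoiler plays one more round, picking up the spare pair so that only the active pebbles remain on the board: the pebbled vertices are then $a, b$ in $G$ and $a', b'$ in $H$, with $\dist_{G}(a,b) \leq 1$ and $a', b'$ in distinct components of $H$. Hence $a'$ and $b'$ are distinct and non-adjacent, whereas $a$ and $b$ are equal or adjacent, so $a \mapsto a', b \mapsto b'$ does not extend to an isomorphism of the induced subgraphs, and Spoiler wins. At no point are more than three pebble pairs on the board --- the two given by the hypothesis plus the reused spare --- so Spoiler wins with one additional pebble in $O(\log n)$ rounds. The only real idea is the path bisection; the step needing the most care is checking that the invariant truly is restored each round, in particular that the image $w'$ of a single vertex necessarily disagrees in component with one of the two current endpoints (this is exactly where we use that $a'$ and $b'$ lie in \emph{different} components of $H$) and that the ``far'' endpoint pair can always be recycled as the spare, so that one extra pebble indeed suffices.
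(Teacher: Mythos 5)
Your proposal is correct and follows essentially the same strategy as the paper's proof: bisect a shortest $u$--$v$ path with a single reusable pebble pair, noting that the image of the midpoint must disagree in component with one of the two current endpoints, and recurse on the corresponding half until the endpoints are equal or adjacent in $G$ but lie in distinct components of $H$. Your version is somewhat more careful about the invariant and the bookkeeping of which pair serves as the spare, but the underlying idea and the pebble/round accounting are identical.
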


\begin{proof}
Let $P$ be a shortest $u-v$ path in $G$. Spoiler begins by pebbling a midpoint $w$ of $P$. Let $w'$ be Duplicator's response. As $u', v'$ belong to different components, we may assume without loss of generality that $w'$ belongs to the component containing $u'$. Let $P_{wv}$ be the $w-v$ path within $P$. At the next round, Spoiler picks up the pebble on $u$ and iterates on the above argument using $P_{wv}$. This argument only occurs finitely many times until we hit a base case, where $wv$ is an edge of $G$, while $w' v'$ is not an edge of $H$.

At each round, we are cutting the size of the path in half. Thus, at most $\log_{2}(n) + 1$ rounds are required. Observe that only one additional pebble was used. The result now follows.
\end{proof}

\subsection{Logics} \label{sec:Logics}

We recall key notions of first-order logic. We have a countable set of variables $\{x_{1}, x_{2}, \ldots \}$. Formulas are defined inductively. For the basis, we have that $x_{i} = x_{j}$ is a formula for all pairs of variables. Now if $\varphi_{1}, \varphi_{2}$ are formulas, then so are the following: $\varphi_{1} \land \varphi_{2}, \varphi_{1} \vee \varphi_{2}, \neg{\varphi_{1}}, \exists{x_{i}} \, \varphi_{1},$ and $\forall{x_{i}} \, \varphi_{1}$. In order to define logics on graphs, we add a relation $E(x, y)$, where $E(x,y) = 1$ if and only if $\{x,y\}$ is an edge of our graph, and $0$ otherwise.  In keeping with the conventions of~\cite{CFI}, we refer to the first-order logic with relation $E$ as $\mathcal{L}$ and its $k$-variable fragment as $\mathcal{L}_{k}$. We refer to the logic $\mathcal{C}$ as the logic obtained by adding counting quantifiers $\exists^{\geq n} x \, \varphi$ (there exist at least $n$ elements $x$ that satisfy $\varphi$) and $\exists{!n} \, x \, \varphi$ (there exist exactly $n$ elements $x$ that satisfy $\varphi$) and its $k$-variable fragment as $\mathcal{C}_{k}$. 

The \textit{quantifier depth} of a formula $\varphi$ (belonging to either $\mathcal{L}$ or $\mathcal{C}$) is the depth of its quantifier nesting. We denote the quantifier depth of $\varphi$ as $\text{qd}(\varphi)$. This is defined inductively as follows.
\begin{itemize}
    \item If $\varphi$ is atomic, then $\text{qd}(\varphi) = 0$.

    \item $\text{qd}(\neg \varphi) = \text{qd}(\varphi)$.

    \item $\text{qd}(\varphi_{1} \vee \varphi_{2}) = \text{qd}(\varphi_{1} \land \varphi_{2}) = \max\{ \text{qd}(\varphi_{1}), \text{qd}(\varphi_{2})\}$.

    \item $\text{qd}(Qx \, \varphi) = \text{qd}(\varphi) + 1$, where $Q$ is a quantifier in the logic.
\end{itemize}

\noindent We denote the fragment of $\mathcal{L}_{k}$ (respectively, $\mathcal{C}_{k}$) where the formulas have quantifier depth at most $r$ as $\mathcal{L}_{k,r}$ (respectively, $\mathcal{C}_{k,r}$). Let $\overline{v} \in V(X)^{k}, \overline{v'} \in V(Y)^{k}$. We note that $\overline{v}, \overline{v'}$ are distinguished by $(k,r)$-WL if and only if there exists a formula $\varphi \in \mathcal{C}_{k+1,r}$ such that $(X, \overline{v}) \models \varphi$ and $(Y, \overline{v'}) \not \models \varphi$ \cite{ImmermanLander1990, CFI}.

\subsection{Rank-Width}

\noindent Oum \& Seymour \cite{OumSeymour2006} introduced the rank-width parameter to measure the width of a certain hierarchical decomposition of graphs. The goal is to intuitively split the vertices of a graph along cuts of low complexity in a hierarchical fashion. Here, the complexity is the $\mathbb{F}_{2}$-rank of the matrix capturing the adjacencies crossing the cut.

Precisely, let $G$ be a graph, and let $X, Y \subseteq V(G)$. Define $M(X, Y) \in \mathbb{F}_{2}^{X \times Y}$ to be the matrix where $(M(X,Y))_{uv} = 1$ if and only if $uv \in E(G)$. That is, $M(X,Y)$ is the submatrix of the adjacency matrix whose rows are indexed by $X$ and whose columns are indexed by $Y$. Denote $\rho(X) := \rk_{\mathbb{F}_{2}}(M(X, \overline{X}))$.

A \textit{rank decomposition} of $G$ is a tuple $(T, \gamma)$, where $T$ is a rooted binary tree and $\gamma : V(T) \to 2^{V(G)}$ satisfying the following:
\begin{itemize}
    \item For the root $r$ of $T$, $\gamma(r) = V(G)$.
    \item For an internal node $t \in V(T)$, denote the children of $t$ as $s_{1}, s_{2}$. For every internal node $t$, we have that $\gamma(t) = \gamma(s_{1}) \cup \gamma(s_{2})$, and $\gamma(s_{1}) \cap \gamma(s_{2}) = \emptyset$.
    
    \item For any leaf $t \in V(T)$, $|\gamma(t)| = 1$.
\end{itemize}

\begin{remark}
Let $L(T)$ be the set of leaves of $T$. Instead of providing $\gamma$, we can equivalently define a bijection $f : V(G) \to L(T)$. By the second condition of a rank decomposition, $f$ completely determines $\gamma$.
\end{remark}

\noindent The \textit{width} of a rank decomposition $(T, \gamma)$ is:
\[
\text{wd}(T, \gamma) := \max\{ \rho_{G}(\gamma(t)) : t \in V(T) \}.
\]

\noindent The \textit{rank-width} of a graph $G$ is:
\[
\rw(G) := \min\{ \text{wd}(T, \gamma) : (T, \gamma) \text{ is a rank decomposition of } G \}.
\]

\noindent The parameter rank-width is closely related to the parameter clique width, introduced by Courcelle \& Olariu \cite{CourcelleOlariu}. Oum \& Seymour  \cite{OumSeymour2006} showed that:
\[
\rw(G) \leq \text{cw}(G) \leq 2^{\rw(G) + 1} - 1.
\]

\noindent Denote $\text{tw}(G)$ to be the treewidth of $G$. Oum \cite{Oum2008} showed that $\rw(G) \leq \text{tw}(G) + 1$. Note that $\text{tw}(G)$ cannot be bounded in terms of $\rw(G)$; for instance, the complete graph $K_{n}$ has $\rw(K_{n}) = 1$ but $\text{tw}(K_{n}) = n-1$.

\section{Weisfeiler--Leman for Graphs of Bounded Rank-Width} \label{sec:WLRankWidth}

\subsection{Split Pairs and Flip Functions}

In designing a pebbling strategy for graphs of bounded rank-width, Grohe \& Neuen \cite{grohe2019canonisation} sought to pebble a set of vertices $X \subseteq V(G)$ such that $\rho(X) \leq k$ and pebbling $X$ partitions the remaining vertices into sets $C_{1}, \ldots, C_{\ell}$ that can be treated independently. Furthermore, we want for each $i \in [\ell]$ that either $C_{i} \subseteq X$ or $C_{i} \subseteq \overline{X}$. As there can be many edges between $X$ and $\overline{X}$, this is hard to accomplish in general. To this end, Grohe \& Neuen \cite{grohe2019canonisation} utilized split pairs and flip functions. We will now recall their framework.

Let $G(V,E, \chi)$ be a colored graph on $n$ vertices, and suppose the rank-width of $G$ is at most $k$. Let $X \subseteq V(G)$. For $v \in X$, define $\text{vec}_{X}(v) = (a_{v,w})_{w \in \overline{X}} \in \mathbb{F}_{2}^{\overline{X}}$, where $a_{v,w} = 1$ if and only if $vw \in E(G)$. For $S \subseteq X$, define $\text{vec}_{X}(S) = \{ \text{vec}_{X}(v) : v \in S \}$. A \textit{split pair} for $X$ is a pair $(A, B)$ such that:
\begin{enumerate}[label=(\alph*)]
\item $A \subseteq X$, and $B \subseteq \overline{X}$,

\item $\text{vec}_{X}(A)$ forms a linear basis for $\langle \text{vec}_{X}(X) \rangle$, and
\item $\text{vec}_{\overline{X}}(B)$ forms a linear basis for $\langle \text{vec}_{\overline{X}}(\overline{X}) \rangle$.
\end{enumerate}

An \textit{ordered split pair} for $X$ is a pair $((a_1, \ldots, a_q), (b_1, \ldots, b_p))$ such that $(\{a_1, \ldots, a_q\}, \{b_1, \ldots, b_p\})$ is a split pair for $X$.

For $v,w \in V(G)$, we say that $v \approx_{(\overline{a}, \overline{b})} w$ if $N(v) \cap (\overline{a}, \overline{b}) = N(w) \cap (\overline{a}, \overline{b})$ (here, we consider $N(v) \cap (\overline{a}, \overline{b})$ as a set). Observe that $\approx_{(\overline{a}, \overline{b})}$ forms an equivalence relation. For $(\overline{a}, \overline{b}) \in V(G)^{\leq 2k}$, let $2^{\overline{a} \cup \overline{b}}$ be the set of all subsets of $\overline{a} \cup \overline{b} \subseteq V(G)$, where we abuse notation by considering $\overline{a}, \overline{b}$ as subsets of $V(G)$. A \textit{flip extension} of an ordered split pair $(\overline{a}, \overline{b})$ is a tuple:
\[
\overline{s} := \biggr( \overline{a}, \overline{b}, f : \left( 2^{\overline{a} \cup \overline{b}} \right)^2 \to [n] \cup \{ \perp \} \biggr),
\]

\noindent such that for all $M, N \in 2^{\overline{a} \cup \overline{b}}$ with $M \neq N$, either $f(M,N) = \perp$ or $f(N,M) = \perp$. There is no restriction on $f(M,N)$ if $M = N$. For $v, w \in V(G)$, we say that $v \approx_{\overline{s}} w$ if $v \approx_{(\overline{a}, \overline{b})} w$. Denote $[v]_{\approx \overline{s}}$ to be the equivalence class of $v$ with respect to $\approx_{\overline{s}}$. Define the \textit{flipped graph} $G^{\overline{s}} = (V, E^{\overline{s}}, \chi, \overline{a}, \overline{b})$, where $V(G^{\overline{s}}) = V(G)$,
\begin{align*}
E^{\overline{s}} &:= \{ vw \in E(G) : f( N(v) \cap (\overline{a}, \overline{b}), N(w) \cap (\overline{a}, \overline{b})) = d \in [n] \land |N(v) \cap [w]_{\approx \overline{s}}| <  d \} \\
&\cup	\{ vw \not \in E(G) : f( N(v) \cap (\overline{a}, \overline{b}), N(w) \cap (\overline{a}, \overline{b})) = d \in [n] \land |N(v) \cap [w]_{\approx \overline{s}}| \geq d \},
\end{align*}

\noindent and $\chi$ is the same coloring as in $G$. Let $\text{Comp}(G, \overline{s}) \subseteq 2^{V(G)}$ be the set of vertex sets of the connected components of $G^{\overline{s}}$. Observe that $\text{Comp}(G, \overline{s})$ forms a partition of $V(G)$.

Grohe \& Neuen \cite{grohe2019canonisation} established that for any choice $(\overline{a}, \overline{b})$ of split pair, there exists a suitable flip function; and thus, a suitable flip extension. 

\begin{lemma}[{\cite[Lemma~5.6]{grohe2019canonisation}}] \label{lem:GNLem5.6}
Let $G$ be a (colored) graph, and let $X \subseteq V(G)$. Furthermore, let $(\overline{a}, \overline{b})$ be an ordered split pair for $X$. Then there exists a flip extension $\overline{s} := (\overline{a}, \overline{b}, f)$ such that $C \subseteq X$ or $C \subseteq \overline{X}$ for every $C \in \text{Comp}(G, \overline{s})$.
\end{lemma}

Grohe \& Neuen \cite[Section~5]{grohe2019canonisation} considered uncolored flipped graphs. As the conditions for determining the edges of the flipped graph do not depend on the vertex colors, \cite[Lemma~5.6]{grohe2019canonisation} holds in our setting. 

We now turn to showing that the flip extensions preserve both isomorphism and the effects of Weisfeiler--Leman. To do so, we consider vertex colorings $\chi$ that refine the coloring $\chi_{1,3}$ computed by $(1,3)$-WL. The advantage of incorporating such a coloring on the vertices is that it encodes some data about how the vertices of $G$ interact with the specified split pair. Furthermore, the colorings computed by Weisfeiler--Leman are invariant under isomorphism. We take advantage of this to establish that the flipped graph preserves both the isomorphism problem (Lemma~\ref{lem:PreservesIsomorphism}) and the effects of Weisfeiler--Leman (Lemma~\ref{lem:PreservesWL}). For a  graph $G$ of rank-width $k$, we will be running $(6k+3, O(\log n))$, and so we may assume without loss of generality that the vertices of $G$ have been colored according to $(1,3)$-WL.

\begin{lemma} \label{lem:PreservesIsomorphism}
Let $G, H$ be graphs, and let $\overline{s} = (\overline{a}, \overline{b}, f), \overline{s'} = (\overline{a'}, \overline{b'}, f)$ be flip extensions for $G, H$, respectively (we stress that the function $f$ appearing in $\overline{s}$ is the same as that appearing in $\overline{s'}$). Let $k \geq 1, r \geq 3$. Consider the colorings $\chi_{k,r}^{(\overline{a}, \overline{b}), G}, \chi_{k,r}^{(\overline{a'}, \overline{b'}), H}$ obtained by individualizing $(\overline{a}, \overline{b}) \mapsto (\overline{a'}, \overline{b'})$ and applying $(k,r)$-WL. 

Let $\varphi : V(G) \to V(H)$ be a bijection. We have that $\varphi$ is an isomorphism of the colored graphs $(G, \chi_{k,r}^{(\overline{a}, \overline{b}), G}) \cong (H, \chi_{k,r}^{(\overline{a'}, \overline{b'}), H})$ if and only if $\varphi$ is an isomorphism of  $G^{\overline{s}} \cong H^{\overline{s'}}$. 
\end{lemma}

\begin{proof}
Suppose first that $(G, \chi_{k,r}^{(\overline{a}, \overline{b}), G}) \cong (H, \chi_{k,r}^{(\overline{a'}, \overline{b'}), H})$, and let $\varphi : V(G) \to V(H)$ be an isomorphism of $(G, \chi_{k,r}^{(\overline{a}, \overline{b}, G})$ and $(H, \chi_{k,r})$. We claim that $\varphi$ is also an isomorphism of $G^{\overline{s}}$ and $H^{\overline{s'}}$. We first note that for each vertex $v$ in $G^{\overline{s}}$ (resp. $H^{\overline{s'}}$), $v$ receives the same color in both $G^{\overline{s}}$ and $G$ (resp. $H^{\overline{s'}}$ and $H$). Thus, as $\varphi$ is a colored graph isomorphism of $(G, \chi_{k,r}^{(\overline{a}, \overline{b}), G})$ and $(H, \chi_{k,r}^{(\overline{a'}, \overline{b'}), H})$, we have that $\varphi$ respects the vertex colors of $G^{\overline{s}}$ and $H^{\overline{s'}}$. In particular, as $\chi_{k,r}^{(\overline{a}, \overline{b}), G}, \chi_{k,r}^{(\overline{a'}, \overline{b'}), H}$ are isomorphism invariant, we have that for all $v \in V(G)$: $\chi_{k,r}^{(\overline{a'}, \overline{b'}), H}(\varphi(v)) = \chi_{k,r}^{(\overline{a}, \overline{b}), G}(v)$.

It remains to show that for all $v,w \in V(G) (= V(G^{\overline{s}}))$, $vw \in E(G^{\overline{s}})$ if and only if $\varphi(v)\varphi(w) \in E(H^{\overline{s'}})$. Fix $v, w \in V(G)$. As $\varphi : (G, \chi_{k,r}^{(\overline{a}, \overline{b}), G}) \cong (H, \chi_{k,r}^{(\overline{a'}, \overline{b'}), H})$, we have that:
\begin{align*}
&f( N(v) \cap (\overline{a}, \overline{b}), N(w) \cap (\overline{a}, \overline{b})) = f( N(\varphi(v)) \cap (\overline{a'}, \overline{b'}), N(\varphi(w)) \cap (\overline{a'}, \overline{b'})), \text{ and} \\
&|N(v) \cap [w]_{\approx \overline{s}}| = |N(\varphi(v)) \cap [\varphi(w)]_{\approx \overline{s}}|.
\end{align*}

\noindent It follows immediately that $vw \in E(G^{\overline{s}})$ if and only if $\varphi(v)\varphi(w) \in E(H^{\overline{s'}})$. So $\varphi : G^{\overline{s}} \cong H^{\overline{s}}$.  

Conversely, suppose that $\varphi : G^{\overline{s}} \cong H^{\overline{s'}}$. Suppose to the contrary that $\varphi$ is not an isomorphism of $G$ and $H$. By similar argument as above, we may assume that for all $v \in V(G)$: $\chi_{k,r}^{(\overline{a'}, \overline{b'}), H}(\varphi(v)) = \chi_{k,r}^{(\overline{a}, \overline{b}), G}(v)$. Thus, there must exist $v, w \in V(G)$ such that, without loss of generality, $vw \in E(G)$, but $\varphi(v)\varphi(w) \not \in E(H)$. As $\varphi : G^{\overline{s}} \cong H^{\overline{s'}}$, we have that $vw \in E(G^{\overline{s}})$ if and only if $\varphi(v)\varphi(w) \in E(H^{\overline{s}})$. Without loss of generality, suppose that $vw \in E(G^{\overline{s}})$ (the argument is essentially identical in the case when $vw \not \in E(G^{\overline{s}})$). As $vw \in E(G)$ and $vw \in E(G^{\overline{s}})$, we have that there exists some $d \in [n]$ such that: 
\begin{align*}
&f( N(v) \cap (\overline{a}, \overline{b}), N(w) \cap (\overline{a}, \overline{b})) = d \text{ and}, \\ 
&|N(v) \cap [w]_{\approx \overline{s}}| <  d.
\end{align*}

\noindent However, as $\varphi(v)\varphi(w) \in E(H^{\overline{s}})$ and $\varphi(v)\varphi(w) \not \in E(H)$, there exists some $d' \in [n]$ such that:
\begin{align*}
&f( N(\varphi(v)) \cap (\overline{a'}, \overline{b'}), N(\varphi(w)) \cap (\overline{a'}, \overline{b'})) = d' \text{ and},\\ 
&|N(\varphi(v)) \cap [\varphi(w)]_{\approx \overline{s'}}| \geq d'.
\end{align*}

\noindent Observe that if for $u \in \{v,w\}$, $\varphi(N(u) \cap (\overline{a}, \overline{b})) \neq N(\varphi(u)) \cap (\overline{a'}, \overline{b'})$, then after individualizing $(\overline{a}, \overline{b}) \mapsto (\overline{a'}, \overline{b'})$ and running $(k,2)$-WL, that $u$ and $\varphi(u)$ will receive different colors. However, $\chi_{k,r}^{(\overline{a}, \overline{b}), G}(u) = \chi_{k,r}^{(\overline{a'}, \overline{b'}), H}(\varphi(u))$. Thus, we may assume that $d = d'$. It follows that (without loss of generality):
\begin{align*}
&|N(v) \cap [w]_{\approx \overline{s}}| < d, \text{ while} \\
&|N(\varphi(v)) \cap [\varphi(w)]_{\approx \overline{s}'}| \geq d.
\end{align*}

\noindent It follows immediately from the definition of $\approx_{s'}, \approx_{\overline{s'}}$ that the following conditions hold (for any $r \geq 2$):
\begin{itemize}
\item if $x \in [w]_{\approx \overline{s}}$ and $y \in V(G) \setminus [w]_{\approx \overline{s}}$, then $\chi_{k,r}(x) \neq \chi_{k,r}(y)$; and
\item if $x \in [w]_{\approx \overline{s}}$ and $y \in V(H) \setminus [\varphi(w)]_{\approx \overline{s'}}$, then $\chi_{k,r}(x) \neq \chi_{k,r}(y)$.
\end{itemize}

\noindent But then at iteration $r = 3$, $k$-WL will assign $v$ and $\varphi(v)$ different colors, a contradiction. Thus, $\varphi$ must be an isomorphism of the colored graphs $(G, \chi_{k,r}^{(\overline{a}, \overline{b}, G})$ and $(H, \chi_{k,r}^{(\overline{a'}, \overline{b'}, H})$.
\end{proof}

\begin{lemma}[cf. {\cite[Lemma~3.10]{grohe2019canonisation}}] \label{lem:PreservesWL}
Let $G(V,E,\chi), G'(V', E', \chi')$ be colored graphs, and let $\overline{s} = (\overline{a}, \overline{b}, f)$ and $\overline{s'} = (\overline{a'}, \overline{b'}, f)$ be flip extensions (we are using the same flip function $f$ for both $\overline{s}, \overline{s'}$). Let $\chi_{1,3}$ be the coloring resulting from individualizing $(\overline{a}, \overline{b}) \mapsto (\overline{a'}, \overline{b'})$ and running $(1,3)$-WL. Suppose that $\chi, \chi'$ both refine $\chi_{1,3}$.

Let $((\overline{v}, \overline{w})) = ((v_1, \ldots, v_{\ell}), (w_{1}, \ldots, w_{\ell}))$ be a position in the $\ell$-pebble bijective pebble game. We have that Spoiler wins from $((\overline{v}, \overline{w}))$ in the $\ell$-pebble, $r$-round game on $(G, G')$ if and only if Spoiler wins from from $((\overline{v}, \overline{w}))$ in the $\ell$-pebble, $r$-round game on $(G^{\overline{s}}, (G')^{\overline{s}})$.
\end{lemma}

\begin{proof}
Let $\overline{v}, \overline{w}$ be configurations in the $\ell$-pebble game. We have that the map $v_{i} \mapsto w_{i}$ is a marked isomorphism of the induced subgraphs $G[\overline{v}]$ and $G'[\overline{w}]$ if and only if this map is an isomorphism of the induced subgraphs $G^{\overline{s}}[\overline{v}]$ and $(G')^{\overline{s}}[\overline{w}]$.
\end{proof}

\begin{corollary}[Compare rounds cf. {\cite[Corollary~3.12]{grohe2019canonisation}}] \label{cor:GN3.12}
Let $G(V,E,\chi), G'(V', E', \chi')$ be colored graphs, and let $\overline{s} = (\overline{a}, \overline{b}, f)$ and $\overline{s'} = (\overline{a'}, \overline{b'}, f)$ be flip extensions (we are using the same flip function $f$ for both $\overline{s}, \overline{s'}$). Let $\chi_{1,3}$ be the coloring resulting from individualizing $(\overline{a}, \overline{b}) \mapsto (\overline{a'}, \overline{b'})$ and running $(1,3)$-WL. Suppose that $\chi, \chi'$ both refine $\chi_{1,3}$.

Let $\overline{v} \in V^{k}, \overline{v'} \in (V')^{k}$. Let $C$ be a connected component of $G^{\overline{s}}$ such that $\chi(u) \neq \chi(w)$ for all $u \in C$ and all $w \in V \setminus C$. Let $C'$ be a connected component of $(G')^{\overline{s'}}$ such that $\chi'(u') \neq \chi'(w')$ for all $u' \in C'$ and $w' \in V' \setminus C'$. Let $r \geq 1$. Suppose that:
\[
(G[C], \chi_{1,r}^{\overline{v}, G}) \not \cong (G'[C'], \chi_{1,r}^{\overline{v'}, G'}).
\]

\noindent Let $\overline{w} := C \cap \overline{v}$ and $\overline{w'} := C' \cap \overline{v'}$. Then either:
\[
(G[C], \chi_{1,r}^{\overline{w}, G}) \not \cong (G'[C'], \chi_{1,r}^{\overline{w'}, G'}), 
\]

\noindent or $r$ rounds of Color Refinement distinguishes $(G, \chi^{\overline{v}})$ from $(G', (\chi')^{\overline{v'}})$.
\end{corollary}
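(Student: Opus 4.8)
The plan is to mimic the structure of Grohe \& Neuen's proof of their Corollary~3.12, but to track rounds rather than appealing to the stable coloring, using the Connectivity Lemma (\Lem{lem:Connectivity}) at the point where they use component-respecting behavior for free. I will set this up in the pebble game: Spoiler has a strategy in the $2$-pebble game (i.e., using $1$ pebble on the board plus one to move, corresponding to Color Refinement) starting from the configuration $\overline{v} \mapsto \overline{v'}$, and I want to show that either he wins that game in $r$ rounds outright — which is exactly ``$r$ rounds of Color Refinement distinguishes $(G, \chi^{\overline{v}})$ from $(G', (\chi')^{\overline{v'}})$'' — or else the restriction of the game to the components $C, C'$, starting from $\overline{w} \mapsto \overline{w'}$, is one Spoiler wins in $r$ rounds, i.e. the contrapositive of what we want. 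Actually the cleanest framing is the contrapositive of the conclusion: assume $r$ rounds of Color Refinement do \emph{not} distinguish $(G,\chi^{\overline v})$ from $(G',(\chi')^{\overline{v'}})$ \emph{and} $(G[C],\chi_{1,r}^{\overline w, G}) \cong (G[C'],\chi_{1,r}^{\overline{w'},G'})$ via some isomorphism $\psi$; I then want to derive $(G[C],\chi_{1,r}^{\overline v,G}) \cong (G'[C'],\chi_{1,r}^{\overline{v'},G'})$.

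The key steps, in order. First, observe that the color classes of $C$ are globally identifiable: by hypothesis $\chi(u) \neq \chi(w)$ for $u \in C$, $w \in V \setminus C$, and likewise for $C'$; so any bijection $f$ that Duplicator would need to maintain in the game on $G, G'$ from $\overline{v}\mapsto\overline{v'}$ must send $C$ to $C'$ bijectively (otherwise Spoiler pebbles a vertex whose color forces a mismatch — and this costs only a constant number of rounds, absorbed into the ``$r$ rounds'' of the first alternative, using the remark that individualizing a vertex separates its neighbors from non-neighbors after $2$ rounds). Second, I want to transfer the individualization: the claim is that $\chi_{1,r}^{\overline v, G}$ restricted to $C$ refines, or is refined comparably to, $\chi_{1,r}^{\overline w, G}$ on $G[C]$ — the point being that pebbling all of $\overline{v}$ vs.\ only the part $\overline w = C \cap \overline v$ that lies inside $C$ gives a coarser initial situation inside $C$, but the \emph{extra} pebbles outside $C$ only ever help Spoiler, and their effect propagates into $C$ only through edges leaving $C$; since $C$ is a union of color classes, Color Refinement inside $G[C]$ already ``knows'' the color of every vertex, so running Color Refinement on $G$ with $\overline v$ individualized, when restricted to $C$, is simulated by running it on $G[C]$ with $\overline w$ individualized, up to $r$ rounds. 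Third, stitch the two alternatives: if Duplicator's forced component-bijection $C \to C'$ cannot be taken to be an isomorphism of the colored induced subgraphs $(G[C], \chi_{1,r}^{\overline v})$ and $(G'[C'], \chi_{1,r}^{\overline{v'}})$, then by step two this failure already manifests either (a) inside $G[C]$ vs.\ $G'[C']$ with only $\overline w, \overline{w'}$ pebbled — the first disjunct of the conclusion — or (b) globally, because the pebbles outside $C$ that witness the discrepancy give Spoiler a win in the global game within $r$ rounds, which is the second disjunct.

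The main obstacle I anticipate is step two: making precise the claim that ``the extra individualized vertices outside $C$ do not help Spoiler inside $C$ beyond what is already visible,'' while keeping the round count at exactly $r$ rather than $r$ plus a constant. Grohe \& Neuen can be cavalier here because they work with the stable coloring, where additive constants are invisible; with rounds I need to be careful that the Connectivity-Lemma-style arguments and the ``individualize, wait two rounds'' observations are folded into the $r$ rounds of the \emph{global} Color Refinement disjunct (where constants are harmless) rather than into the component disjunct (where I need exactly $r$). I expect this to work because whenever a constant-round argument is invoked, it is invoked to \emph{refute} Duplicator globally, so it lands on the side of the disjunction that tolerates it; but verifying that the bookkeeping closes — in particular that $\psi$ combined with the forced $C \to C'$ bijection really does assemble into the required colored isomorphism, and that nothing forces us to pay rounds twice — is the delicate part. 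A secondary, more routine obstacle is confirming that ``$\chi(u)\neq\chi(w)$ for $u\in C$, $w\notin C$'' is preserved under the Color Refinement rounds (so that the component stays color-isolated throughout), which follows since refinement only splits color classes.
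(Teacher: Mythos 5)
Your overall plan is the paper's: argue the contrapositive, and use the hypothesis that $C$ is a union of $\chi$-color classes to decouple the refinement inside $C$ from the individualized vertices outside it, so that the round count on the component side stays at exactly $r$. The one step you flag as your ``main obstacle'' is, however, a genuine hole in the sketch as written, and it is precisely where the paper invokes the flip-function machinery that your argument never uses. A vertex $u \in C$ can have $G$-neighbors outside $C$, and the round-$i$ colors of those neighbors --- which \emph{are} affected by the pebbles of $\overline{v}$ lying outside $C$ --- feed into $\chi_{1,i+1}(u)$; the observation that ``Color Refinement on $G[C]$ already knows every vertex's color'' does not neutralize this, since it is the \emph{refined} outside colors, not the base colors, that enter the multiset. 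What rescues the argument is that $C$ is a connected component of $G^{f}$: by the round-controlled version of \Lem{lem:FlipIsomorphism}, one may pass to $G^{f}$ and $(G')^{f}$ without changing what $(1,r)$-WL distinguishes, and there the crossing edges are gone entirely, so the refinement restricted to $C$ is, round for round, the refinement of $G^{f}[C]$ with only $\overline{w} = C \cap \overline{v}$ individualized. (The same transfer to the flipped graphs, together with the assumption that $r$ rounds of global Color Refinement fail to distinguish $(G,\chi^{\overline{v}})$ from $(G',(\chi')^{\overline{v'}})$, forces the pebble index sets $I = \{i : v_{i} \in C\}$ and $I' = \{i : v_{i}' \in C'\}$ to coincide.) A short induction on $i \leq r$ then yields $(G[C], \chi_{1,i}^{\overline{v}, G}) \cong (G'[C'], \chi_{1,i}^{\overline{v'}, G'})$ with no additive constants, which also disposes of your bookkeeping worry: no rounds are ever spent on the component disjunct, and the constant-round ``force Duplicator to respect $C$'' arguments you describe are not needed in the contrapositive formulation at all.
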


\begin{proof}
The proof is by contrapositive. Let $I = \{ i \in [k] : v_{i} \in C\}$, and let $I' = \{ i \in [k] : v_{i}' \in C'\}$. Suppose that $(1,r)$-WL fails to distinguish $(G, \chi^{\overline{v}})$ and $(G', (\chi')^{\overline{v'}})$. Then by \Lem{lem:PreservesIsomorphism} and \Lem{lem:PreservesWL}, $(1,r)$-WL fails to distinguish $(G^{\overline{s}}, \chi^{\overline{v}})$ and $((G')^{\overline{s}}, (\chi')^{\overline{v'}})$. So we have that $I = I'$. Now suppose that:
\[
(G[C], \chi_{1,r}^{\overline{w}, G}) \cong (G[C'], \chi_{1,r}^{\overline{w'}, G'}). 
\]

\noindent As $I = I'$, it follows that:
\[
(G[C], \chi_{1,r}^{\overline{v}, G}) \cong (G[C'], \chi_{1,r}^{\overline{v'}, G'}).
\]

\noindent As $1$-WL only takes into account the neighbors of a given vertex in the refinement step, we have by induction that for each $i \leq r$:
\[
(G[C], \chi_{1,i}^{\overline{v}, G}) \cong (G[C'], \chi_{1,i}^{\overline{v'}, G'}).
\]

\noindent The result now follows.
\end{proof}

\subsection{WL for Graphs of Bounded Rank-Width}

Our goal in this section is to establish the following.

\begin{theorem} \label{thm:WLRankWidth}
Let $G$ be a graph on $n$ vertices of rank-width $k$, and let $H$ be an arbitrary graph such that $G \not \cong H$. We have that the $(6k+3, O(\log n))$-WL algorithm will distinguish $G$ from $H$.
\end{theorem}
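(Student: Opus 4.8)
The plan is to adapt the Grohe--Neuen pebbling strategy of \cite{grohe2019canonisation}, but to control carefully for the number of rounds so that only $O(\log n)$ rounds are needed. The overall approach is an induction on a rank decomposition $(T,\gamma)$ of $G$ of width $k$. At each stage we maintain an ordered split pair $(\overline{a},\overline{b})$ for the current vertex set $X = \gamma(t)$ associated with a node $t$ of $T$; since $\rho_G(\gamma(t)) \le k$, the split pair consists of at most $2k$ vertices, and together with a constant number of auxiliary pebbles (for the connectivity argument of \Lem{lem:Connectivity}, for individualizing midpoints, etc.) this should fit within the $6k+3$ pebble budget. I would set up the invariant: after Spoiler has individualized a split pair for $X$ and $2$ rounds of Color Refinement have been applied, \Lem{lem:GN3.6} gives a flip function $f$ so that every connected component $C$ of the flipped graph $G'^f$ satisfies $C \subseteq X$ or $C \subseteq \overline X$. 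If Duplicator's bijection fails to respect these components, \Cor{cor:GN3.12} lets Spoiler restrict attention to a single component while only paying the rounds already spent. Because the components refine into the two halves $\gamma(s_1), \gamma(s_2)$ of the child decomposition, Spoiler can descend one level of $T$.

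The key point for the round bound is that $T$ can be taken to have depth $O(\log n)$. This is not automatic — a rank decomposition is only guaranteed to be a binary tree — so I would invoke (or prove) a balancing lemma: any graph of rank width $k$ admits a rank decomposition of width $O(k)$ (or exactly $k$, depending on the precise statement available) whose underlying tree has depth $O(\log n)$. This is the analogue of the well-known fact that bounded-treewidth graphs have $O(\log n)$-depth tree decompositions, and is where I expect the main technical obstacle to lie: one must ensure the depth reduction does not blow up the width beyond what the pebble budget allows, and that the split pairs at each level remain pebblable. Assuming such a balanced decomposition, each level of the recursion costs $O(1)$ rounds for re-individualizing a fresh split pair plus Color Refinement, except for the invocations of \Lem{lem:Connectivity}, each of which costs $O(\log n)$ rounds. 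Since a naive accounting of $O(\log n)$ levels each costing $O(\log n)$ rounds would give $O(\log^2 n)$, I would need to argue that the expensive connectivity steps are only invoked $O(1)$ times along any root-to-leaf branch, or amortize them — e.g., by showing that once components are separated they stay separated, so the path-halving argument is run at most a bounded number of times per branch.

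Having reached a leaf of $T$, the set $X$ is a single vertex, and the accumulated pebbles plus the component structure pin down enough of the graph that Spoiler wins in $O(1)$ further rounds — here one uses that if $G \not\cong H$ then at some node of the decomposition Duplicator cannot match the induced-subgraph-plus-split-pair data, exactly as in \cite{grohe2019canonisation}, and \Lem{lem:FlipIsomorphism} ensures passing to flipped graphs loses no information. The bookkeeping to verify that at most $6k+3$ pebbles are ever simultaneously on the board is the remaining routine-but-delicate step: one needs the current split pair ($\le 2k$), possibly the parent's split pair retained during the transition ($\le 2k$), the split pair of the sibling or of a component being processed ($\le 2k$), and $O(1)$ working pebbles, and must check these do not all coincide, matching the $6k+3$ figure in the theorem statement. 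I would present the argument by first stating and using the balanced-decomposition lemma as a black box, then giving the inductive pebble-game strategy with explicit round and pebble counts, deferring the amortization of the connectivity lemma to a separate sub-claim.
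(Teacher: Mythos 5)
Your plan follows essentially the same route as the paper's proof: descend a balanced rank decomposition, pebble split pairs at each node, use the flip function (\Lem{lem:GN3.6}, \Lem{lem:FlipIsomorphism}, \Cor{cor:GN3.12}) to confine the non-isomorphism to a single component, and invoke \Lem{lem:Connectivity} when Duplicator's bijection fails to respect components. The two ingredients you flag as open are indeed the crux, and both resolve as you hope, but with specifics that affect your bookkeeping. First, the balancing lemma you want to ``invoke or prove'' is a known result of Courcelle \& Kant\'e (2007): every graph of rank width $k$ has a rank decomposition of width at most $2k$ whose tree has height at most $3(\log n + 1)$. The doubling of the width is not innocuous: a split pair for a set $X$ with $\rho_G(X) \le 2k$ consists of up to $4k$ vertices, so the three split pairs you list (parent plus two children) naively cost $12k$ pebbles, not $6k$. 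Getting down to $6k$ is exactly where the \emph{nice} split pairs of \Lem{lem:ExistNice} are needed: one chooses the children's pairs so that $\overline{a} \subseteq \overline{a}_1 \cup \overline{a}_2$ and $\overline{b}_i \subseteq \overline{b} \cup \overline{a}_{3-i}$, so the parent's and the children's basis vertices overlap rather than stack. Your sketch omits this mechanism, and without it the stated $6k+3$ bound does not follow from your accounting.

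Second, your worry about an $O(\log^2 n)$ round count is resolved more simply than by amortization along branches: in the paper's strategy, the $O(\log n)$-round path-halving argument of \Lem{lem:Connectivity} is only ever launched when Duplicator has already committed to a bijection that splits a connected component, at which point Spoiler wins the game outright at the end of that subroutine. So it is invoked at most once in the entire play (likewise for the one application of \Cor{cor:GN3.12}), and every other level of the tree costs $O(1)$ rounds, giving $O(\log n)$ total. A related subtlety you should not gloss over: two rounds of Color Refinement after individualizing a split pair suffice to realize the flip-function partition, but they do \emph{not} suffice to color-separate the connected components of the flipped graph; the paper handles this by running the refinement with $2$-WL for $O(\log n)$ rounds (which the $(6k+3)$-dimensional budget easily simulates) so that component membership is eventually reflected in the colors, and by treating a component-violating bijection as an immediate win via \Lem{lem:Connectivity}. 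With those three points supplied, your outline matches the paper's argument.
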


\begin{definition}[{\cite[Definition~4.1]{grohe2019canonisation}}]
Let $G$ be a graph, and let $X, X_{1}, X_{2} \subseteq V(G)$ such that $X = X_{1} \sqcup X_{2}$. Let $(A, B)$ be a split pair for $X$, and let $(A_{i}, B_{i})$ ($i = 1, 2)$ be a split pair for $X_{i}$. We say that $(A_{i}, B_{i})$ are \textit{nice} with respect to $(A, B)$ if the following conditions hold:
\begin{enumerate}[label=(\alph*)]
\item $A \cap X_{i} \subseteq A_{i}$ for each $i \in \{1,2\}$, and 
\item $B_{2} \cap X_{1} \subseteq A_{1}$ and similarly $B_{1} \cap X_{2} \subseteq A_{2}$. 
\end{enumerate}

\noindent A triple $((A, B), (A_{1}, B_{1}), (A_{2}, B_{2}))$ of ordered split pairs is \textit{nice} if the underlying triple of unordered split pairs is nice.
\end{definition}

\begin{lemma}[{\cite[Lemma~4.2]{grohe2019canonisation}}] \label{lem:ExistNice}
Let $G$ be a graph, and let $X, X_{1}, X_{2} \subseteq V(G)$ such that $X = X_{1} \sqcup X_{2}$. Let $(A, B)$ be a split pair for $X$. There exist nice split pairs $(A_{i}, B_{i})$ for $X_{i}$ ($i = 1, 2$) such that additionally $B_{i} \cap \overline{X_{i}} \subseteq B$.
\end{lemma}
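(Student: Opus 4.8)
The plan is to construct the split pairs $(A_i, B_i)$ for $X_i$ by a greedy extension argument, starting from the relevant pieces of the given split pair $(A,B)$ for $X$ and enlarging them to bases while keeping all the required containments. Fix $i \in \{1,2\}$ and write $j$ for the other index. Recall that a split pair for $X_i$ consists of a set $A_i \subseteq X_i$ whose vectors $\mathrm{vec}_{X_i}$ form a basis of $\langle \mathrm{vec}_{X_i}(X_i)\rangle$, together with a set $B_i \subseteq \overline{X_i}$ whose vectors $\mathrm{vec}_{\overline{X_i}}$ form a basis of $\langle \mathrm{vec}_{\overline{X_i}}(\overline{X_i})\rangle$. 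The key point is that $\overline{X_i} = \overline{X} \sqcup X_j$, so a vertex of $\overline{X_i}$ is either ``outside $X$ entirely'' or ``inside the sibling block $X_j$.''

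First I would build $B_i$. I start with $B \cap \overline{X_i}$: these are the vertices of $B$ that lie in $\overline{X_i}$, and they already satisfy the extra requirement $B_i \cap \overline{X_i} \subseteq B$ trivially for this initial chunk. Then I extend $\{\mathrm{vec}_{\overline{X_i}}(v) : v \in B\cap\overline{X_i}\}$ to a full basis of $\langle \mathrm{vec}_{\overline{X_i}}(\overline{X_i})\rangle$; but when choosing the new basis vectors I will always prefer to pick representatives from $\overline{X}$ rather than from $X_j$ whenever the needed vector is realized by some vertex of $\overline{X}$. This ensures that any basis vertex that does lie in $X_j$ contributes a vector $\mathrm{vec}_{\overline{X_i}}$ that is \emph{not} in the span of $\mathrm{vec}_{\overline{X_i}}(\overline{X})$, which is exactly the leverage needed for condition (b). Next I build $A_i$: start with $(A \cap X_i) \cup (B_j \cap X_i)$ — wait, $B_j$ is the $X_j$-side set of a split pair for $X_j$, and its vectors live in $\overline{X_j}$; I instead start with $A \cap X_i$ together with the vertices of $B_i$ that happen to lie in $X_j$ pushed over... let me restate: start $A_i$ from $A\cap X_i$, which is forced by condition (a), extend to a basis of $\langle \mathrm{vec}_{X_i}(X_i)\rangle$, and then separately arrange that $B_j \cap X_i \subseteq A_i$ by doing the two constructions for $i=1$ and $i=2$ in a coordinated order so that each $B_j\cap X_i$ is known before $A_i$ is finalized and can be folded into the spanning set (after checking those vectors are independent in $\langle\mathrm{vec}_{X_i}(X_i)\rangle$, or discarding dependent ones harmlessly since they'd already be spanned).

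The steps in order: (1) choose $B_1$ and $B_2$ as above, each containing $B\cap\overline{X_i}$, each with the ``prefer $\overline X$'' extension rule, and verify each is a valid basis and satisfies $B_i\cap\overline{X_i}\subseteq B$; (2) using that $B_j\cap X_i$ is now fixed, choose $A_i \supseteq (A\cap X_i)$ so that $\mathrm{vec}_{X_i}(A_i)$ is a basis of $\langle\mathrm{vec}_{X_i}(X_i)\rangle$ and additionally $B_j\cap X_i \subseteq A_i$ (possible because $|A_i| = \rho_G(X_i)$ is the full dimension, so any independent subset of $X_i$-vectors extends to such a basis, and we must check $\mathrm{vec}_{X_i}(B_j\cap X_i)$ together with $\mathrm{vec}_{X_i}(A\cap X_i)$ stays independent — here is where the ``prefer $\overline X$'' choice in step (1) pays off, since it forces those extra vertices to have genuinely new $X_i$-direction content); (3) read off conditions (a), (b), and the extra condition $B_i\cap\overline{X_i}\subseteq B$ from the construction.

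The main obstacle I expect is step (2): showing that $\mathrm{vec}_{X_i}(A\cap X_i)$ and $\mathrm{vec}_{X_i}(B_j\cap X_i)$ can be simultaneously completed to a \emph{single} basis of $\langle\mathrm{vec}_{X_i}(X_i)\rangle$ of the right size, i.e.\ that there is no rank obstruction forcing $|A_i|$ to exceed $\rho_G(X_i)$. This requires a careful linear-algebra argument relating the three cuts $(X,\overline X)$, $(X_i,\overline{X_i})$, $(X_j,\overline{X_j})$ and exploiting submodularity of $\rho$ together with the specific way $B$ was split; essentially one must show every vector in $\mathrm{vec}_{X_i}(B_j\cap X_i)$ that is \emph{not} already in $\langle\mathrm{vec}_{X_i}(A\cap X_i)\rangle$ corresponds to a dimension we were going to have to add to $A_i$ anyway, so no size blow-up occurs. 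I would handle this by tracking how a vector $\mathrm{vec}_{\overline{X_j}}(v)$ for $v\in X_i$ decomposes into its $\overline X$-part and its $X_i\cap\overline{X_j}=\emptyset$... more precisely decompose along $\overline{X_j} = \overline X \sqcup X_i$, and match up spans on each side. This is the one place a genuine computation is unavoidable, but it is finite-dimensional linear algebra over $\mathbb{F}_2$ and should go through.
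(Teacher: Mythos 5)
Your construction order is backwards, and the step you flag as ``the main obstacle'' is not merely unproven --- it can genuinely fail. If you fix $B_2$ first by greedily extending (the independent part of) $B$ to a basis of $\langle\mathrm{vec}_{\overline{X_2}}(\overline{X_2})\rangle$, nothing forces the greedy choices to be compatible with $A$. Concretely, take $\overline{X}=\{y\}$, $X_1=\{a,b,c\}$, $X_2=\{w\}$, with edge set $\{ay,by,aw,bw\}$. Then $(\{a\},\{y\})$ is a split pair for $X$, so condition (a) forces $a\in A_1$. Computing $\mathrm{vec}_{\overline{X_2}}$ (one coordinate, indexed by $w$): $y$ and $c$ give $0$ while $a$ and $b$ both give $(1)$, so your greedy extension may legitimately output $B_2=\{b\}$. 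Condition (b) would then demand $\{a,b\}\subseteq A_1$, which is impossible because $\mathrm{vec}_{X_1}(a)=\mathrm{vec}_{X_1}(b)=(1,1)$ and $\langle\mathrm{vec}_{X_1}(X_1)\rangle$ is one-dimensional. So no linear-algebra argument rescues step (2) for an arbitrary prior choice of $B_j$; the choice of $B_j$ must be coordinated with $A_i$, not the other way around.

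The repair is to reverse the order. First extend $A\cap X_i$ to a basis $A_i$ of $\langle\mathrm{vec}_{X_i}(X_i)\rangle$; this is legitimate because for $v\in X_1$ the restriction of $\mathrm{vec}_{X_1}(v)$ to the $\overline{X}$-coordinates is $\mathrm{vec}_X(v)$, so $\mathrm{vec}_{X_1}(A\cap X_1)$ inherits independence from $\mathrm{vec}_X(A)$. Then use the fact that spanning relations survive restriction of coordinates: for $u\in\overline{X}$, the relation $\mathrm{vec}_{\overline{X}}(u)\in\langle\mathrm{vec}_{\overline{X}}(B)\rangle$ restricts (to the $X_2$-coordinates) to $\mathrm{vec}_{\overline{X_2}}(u)\in\langle\mathrm{vec}_{\overline{X_2}}(B)\rangle$, and for $u\in X_1$, the relation $\mathrm{vec}_{X_1}(u)\in\langle\mathrm{vec}_{X_1}(A_1)\rangle$ restricts to $\mathrm{vec}_{\overline{X_2}}(u)\in\langle\mathrm{vec}_{\overline{X_2}}(A_1)\rangle$. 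Hence $\mathrm{vec}_{\overline{X_2}}(B\cup A_1)$ spans $\langle\mathrm{vec}_{\overline{X_2}}(\overline{X_2})\rangle$, and choosing $B_2\subseteq B\cup A_1$ yields (a), (b) and the extra containment simultaneously. This restriction fact is also what your argument for the $B_i$ side is missing: every vector of $\mathrm{vec}_{\overline{X_i}}(\overline{X})$ is already spanned by $\mathrm{vec}_{\overline{X_i}}(B)$, so your ``prefer $\overline{X}$'' rule is vacuous, and without noting this your extension could in principle add a vertex of $\overline{X}\setminus B$ and violate the extra condition (which, as it is actually used later in the paper, reads $B_i\cap\overline{X}\subseteq B$; the statement's $B_i\cap\overline{X_i}$ is a typo, since $B_i\subseteq\overline{X_i}$ would make it assert $B_i\subseteq B$). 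Finally, note the paper imports this lemma from Grohe--Neuen without proof, so there is no internal argument to compare against.
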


\begin{remark}
Grohe \& Neuen use in the proof of \cite[Theorem~5.5]{grohe2019canonisation} that \cite[Lemma~4.2]{grohe2019canonisation} holds for the flipped graphs they define in Section 5, and not just the earlier notion of flipped graphs they consider in Section 3. Hence, \cite[Lemma~4.2]{grohe2019canonisation} holds in our setting as well.
\end{remark}

\begin{definition}
Let $G$ be a graph. A \textit{component partition} of $G$ is a partition $\mathcal{P}$ of $V(G)$ such that every connected component appears in exactly one block of $\mathcal{P}$. That is, for every connected component $C$ of $G$, there exists a $P \in \mathcal{P}$ such that $C \subseteq P$.
\end{definition}

\begin{lemma}[{\cite[Observation~4.3]{grohe2019canonisation}}] \label{Observation4.3}
Let $G, H$ be two non-isomorphic graphs, and let $\mathcal{P}, \mathcal{Q}$ be component partitions of $G, H$, respectively. Let $\sigma : V(G) \to V(H)$ be a bijection. There exists a vertex $v$ of $G$ such that $G[P] \not \cong H[Q]$, where $P \in \mathcal{P}$ is the unique set containing $v$ and $Q \in \mathcal{Q}$ is the unique set containing $\sigma(v)$.
\end{lemma}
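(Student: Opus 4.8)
The plan is to prove the contrapositive: assuming that for \emph{every} vertex $v$ we have $G[P] \cong H[Q]$, where $P \in \mathcal{P}$ is the block containing $v$ and $Q \in \mathcal{Q}$ is the block containing $\sigma(v)$, I will deduce $G \cong H$, which contradicts the hypothesis. The first step is to observe that because $\mathcal{P}$ is a \emph{component} partition, no edge of $G$ joins vertices in two distinct blocks (both endpoints of an edge lie in the same connected component, hence in the same block); consequently $G$ is the disjoint union $\bigsqcup_{P \in \mathcal{P}} G[P]$, and likewise $H = \bigsqcup_{Q \in \mathcal{Q}} H[Q]$. Therefore it suffices to show that the two multisets of isomorphism types $\{\!\!\{\, G[P] : P \in \mathcal{P} \,\}\!\!\}$ and $\{\!\!\{\, H[Q] : Q \in \mathcal{Q} \,\}\!\!\}$ coincide.

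The second step is a counting argument. To each vertex $v \in V(G)$ assign its \emph{block type}, namely the isomorphism type of $G[P]$ with $v \in P$, and do the same for vertices of $H$. The standing assumption says precisely that $\sigma$ maps every vertex to a vertex of the same block type. Since $\sigma$ is a bijection $V(G) \to V(H)$ and the block-type classes partition $V(G)$ and $V(H)$ respectively, $\sigma$ must restrict to a bijection between the class of $G$-vertices of block type $T$ and the class of $H$-vertices of block type $T$, for every isomorphism type $T$; in particular these two classes have equal cardinality. Now the class of $G$-vertices of block type $T$ is a union of blocks of $\mathcal{P}$, each of size $|V(T)|$, so its cardinality equals $|V(T)| \cdot a_T$, where $a_T$ is the number of blocks $P \in \mathcal{P}$ with $G[P] \cong T$; similarly the $H$-side has cardinality $|V(T)| \cdot b_T$. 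Since $T$ has at least one vertex, dividing by $|V(T)|$ yields $a_T = b_T$ for every $T$, which is exactly the required equality of multisets; combined with the first step this gives $G \cong H$.

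I do not expect a serious obstacle. The two points that require a little care are (i) correctly invoking the component-partition property to conclude that there are no edges between blocks, so that $G$ and $H$ genuinely decompose as disjoint unions of their block-subgraphs, and (ii) the bookkeeping that converts ``$\sigma$ preserves block types'' — equivalently, equal vertex counts in each block-type class — into ``equal numbers of blocks of each isomorphism type,'' which is where the division by $|V(T)| \geq 1$ is used. Everything else is routine.
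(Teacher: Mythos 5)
Your proof is correct: the reduction to the contrapositive, the observation that a component partition induces a disjoint-union decomposition $G = \bigsqcup_{P \in \mathcal{P}} G[P]$, and the counting argument (dividing the size of each block-type class by $|V(T)| \geq 1$ to equate the number of blocks of each isomorphism type) together give a complete and sound argument. The paper itself does not reproduce a proof of this lemma --- it is cited directly from Grohe \& Neuen --- but your argument is essentially the standard one used there, so there is nothing to flag.
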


We now prove \Thm{thm:WLRankWidth}.

\begin{proof}[Proof of \Thm{thm:WLRankWidth}]
We follow the strategy of \cite[Theorem~4.4]{grohe2019canonisation}. Let $G(V, E, \chi_{G})$ be a colored graph of rank-width $\leq k$, and let $H$ be an arbitrary graph such that $G \not \cong H$. By \cite[Theorem~5]{CourcelleKante2007}, $G$ admits a rank decomposition $(T, \gamma)$ of width at most $2k$ where $T$ has height at most $3 \cdot (\log(n) + 1)$.

We will show that Spoiler has a winning strategy in the pebble game with $6k+3$ pebbles on the board, and in $O(\log n)$ rounds. In a similar manner as in the proof of \cite[Theorem~4.4]{grohe2019canonisation}, we will first argue that $12k+5$ pebbles suffice, and then show how to improve the bound to use only $6k+3$ pebbles.

Spoiler's strategy is to play along the rank decomposition $(T, \gamma)$ starting from the root. As Spoiler proceeds down the tree, the non-isomorphism is confined to increasingly smaller parts of $G$ and $H$. At a node $t \in V(T)$, Spoiler pebbles a split pair $(\overline{a}, \overline{b})$ of $X = \gamma(t)$. Let $\overline{s} = (\overline{a}, \overline{b}, \varphi), \overline{s'} = (\overline{a'}, \overline{b'}, \varphi)$ be the flip extensions provided by Lemma~\ref{lem:GNLem5.6} (we stress that the same flip function $\varphi$ is used for both $\overline{s}, \overline{s'}$). We now turn to confining the non-isomorphism. After the first three rounds of Color Refinement, Spoiler identifies a pair of non-isomorphic components $C \subseteq X, C' \subseteq V(H)$ in the flipped graphs $G^{\overline{s}}$ and $H^{\overline{s'}}$. In particular, Spoiler seeks to find such components $C$ and $C'$ such that $C$ is increasingly further from the root of $T$. Once Spoiler reaches a leaf node of $T$, Spoiler can quickly win. Spoiler places a pebble on a vertex in $C$ and its image in $C'$, under Duplicator's bijection at the given round.

We note that three rounds of Color Refinement suffice for WL to detect the partitioning induced by the flip function (see the proof of Lemma~\ref{lem:PreservesIsomorphism}), though it is not sufficiently powerful to detect the connected components of $G^{\overline{s}}$ and $H^{\overline{s}}$. In the argument below, we will technically consider graphs where the refinement step uses $(2, O(\log n))$-WL. This ensures that, after individualizing a vertex on a given component $C$, the vertices of $C$ receive different colors than those of $V(G) \setminus C$. This will eventually happen, and so in the pebble game characterization, we can continue to descend along $T$ as if the vertices of $C$ have been distinguished from $V(G) \setminus C$. This is a key point where our strategy deviates from that of \cite[Theorem~4.4]{grohe2019canonisation}.

Suppose that at a given round, we have the pebbled configuration $((\overline{a}, \overline{b}, v), (\overline{a'}, \overline{b'}, v'))$, where:
\begin{itemize}
    \item There exists a $t \in T$ such that  $(\overline{a}, \overline{b})$ is a split pair for $\gamma(t)$,
    \item $v \in \gamma(t)$, and let $C$ be the component containing $v$; and 
    \item $v'$ belongs to some component $C'$ of $H$ such that:
    \[
    (G[C], \chi_{2,O(\log n)}^{(\overline{a}, \overline{b}, v)}) \not \cong (H[C'], \chi_{2,O(\log n)}^{(\overline{a'}, \overline{b'}, v)}).   
    \]
\end{itemize}

\noindent \\ Observe that this configuration uses at most $4k+1$ pebbles ($2k$ pebbles for $\overline{a}$, $2k$ pebbles for $\overline{b}$, and a single pebble for $v$). Now by \Lem{lem:GNLem5.6}, we may assume that $C \subseteq \gamma(t)$. Grohe \& Neuen \cite[Theorem~4.4]{grohe2019canonisation} observed that $\chi_{\infty}^{(\overline{a}, \overline{b}, v)}(u_{1}) \neq \chi_{\infty}^{(\overline{a}, \overline{b}, v)}(u_{2})$ for all $u_{1} \in C$ and all $u_{2} \in V(G) \setminus C$. However, in order to obtain the analogous result using $O(\log n)$ rounds, Lemma~\ref{lem:Connectivity} provides that Weisfeiler--Leman of dimension $k \geq 2$ suffices. Note that we will be using WL of dimension $\geq 9$. Furthermore, as $T$ has height $3 \cdot (\log(n) + 1)$, we will be running $(6k+3)$-WL for $\geq 3 \cdot (\log(n)+1)$ rounds. Thus, we may assume without loss of generality that Duplicator selects bijections that map $C \mapsto C'$ (setwise). 

By definition, if $t \in V(T)$ is the root node, then $\gamma(t) = V(G)$. So the empty configuration is a split pair for $\gamma(t)$. We now show by induction on $|\gamma(t)|$ that Spoiler can win from such a position.

Suppose $t$ is a leaf node. Then $|\gamma(t)| = 1$. In this case, $C = \{ v\}$. If $(G[C], \chi_{1,3}^{(\overline{a}, \overline{b}, v)}) \not \cong (H[C'], \chi_{1,3}^{(\overline{a'}, \overline{b'}, v')})$, then either (i) $v$ and $v'$ are assigned different colors, or (ii) $|C'| > 1$. In either case, Spoiler wins with at most $1$ additional pebble and $2$ additional rounds.

For the inductive step, suppose $|\gamma(t)| > 1$. Let $t_{1}, t_{2}$ be the children of $t$ in $T$, and let $X_{i} := \gamma(t_{i})$ ($i = 1, 2$). By \Lem{lem:ExistNice}, there exist nice split pairs $(\overline{a}_{i}, \overline{b}_{i})$ for $t_{i}$. Spoiler pebbles $(\overline{a}_{1}, \overline{b}_{1}, \overline{a}_{2}, \overline{b}_{2})$. Let $(\overline{a'_{1}}, \overline{b'_{1}}, \overline{a'_{2}}, \overline{b'_{2}})$ be Duplicator's response. Let $\alpha := (\overline{a}, \overline{b}, \overline{a}_{1}, \overline{b}_{1}, \overline{a}_{2}, \overline{b}_{2}, v)$, and let $\alpha' := (\overline{a'}, \overline{b'}, \overline{a'_{1}}, \overline{b'_{1}}, \overline{a'_{2}}, \overline{b'_{2}}, v')$. As Grohe \& Neuen \cite[Theorem~4.1]{grohe2019canonisation} noted, the intuitive advantage of pebbling nice split pairs is that we can remove the pebbles from $\overline{a}, \overline{b}$ and $\overline{a}_{3-i}, \overline{b}_{3-i}$ without unpebbling some element of $X_{i}$. We will use this later in our analysis.

Let $\overline{s_{i}} = (\overline{a}_{i}, \overline{b}_{i}, \varphi_{i})$ be the flip extension with respect to the split pair $(\overline{a}_{i}, \overline{b}_{i})$ for $X_{i}$ obtained via Lemma~\ref{lem:GNLem5.6}. Define $\overline{s_{i}'} = (\overline{a'}_{i}, \overline{b'}_{i}, \varphi_{i})$ analogously (here, we stress that $\varphi_{i}$ is the same flip function in both $\overline{s_{i}}, \overline{s_{i}'}$). Let $f : V(G) \to V(H)$ be the bijection that Duplicator selects. As $v \mapsto v'$ has been pebbled, $v \in C$, and $v' \in C'$, we may assume that $f(C) = C'$. Otherwise, there exists a vertex $w \in C$ such that $f(w) \not \in C'$. Spoiler places a pebble on $w \mapsto f(w)$. Now by \Lem{lem:Connectivity}, Spoiler wins using one more pebble and $O(\log n)$ rounds.

Now without loss of generality, suppose $v \in X_{1}$. Let $S := \{ C_{1}, \ldots, C_{p} \}$ be the components of $\text{Comp}(G, \overline{s}_{1})$ which have a non-empty intersection with $C$. Let $S' := \{ C_{1}', \ldots, C_{p}'\}$ be the analogous set of components in $\text{Comp}(H, \overline{s'}_{1})$ that have non-empty intersection with $C'$. By \Lem{lem:PreservesIsomorphism}, we have that:
\[
( (G^{\overline{s_1}})[C], \chi_{2,O(\log n)}^{\overline{\alpha}}) \not \cong ( (H^{\overline{s_{1}'}})[C'], \chi_{2,O(\log n)}^{\overline{\alpha'}}).
\]

\noindent Let $f' : V(G) \to V(H)$ be the bijection that Duplicator selects. Now by \Lem{Observation4.3}, there exists some $w \in C$ such that 
\begin{align} \label{eq:BeforeCase1}
( (G^{\overline{s_{1}}})[C \cap C_{i}], \chi_{2,O(\log n)}^{\overline{\alpha}}) \not \cong ( (H^{\overline{s_{1}'}})[C' \cap C_{i}'], \chi_{2,(O(\log n)}^{\overline{\alpha'}}),
\end{align}

\noindent where $i \in [p]$ is the unique index such that $C_{i}$ contains $w$. We label $C_{i}'$ to be the unique component of $S'$ containing $f'(w)$. By \Lem{lem:PreservesIsomorphism}, we have that:
\[
(G[C \cap C_{i}], \chi_{2,O(\log n)}^{\overline{\alpha}}) \not \cong (H[C' \cap C_{i}'], \chi_{2,O(\log n)}^{\overline{\alpha'}}).
\]

\noindent By \Lem{lem:GNLem5.6}, we have that $C_{i} \subseteq X_{1}$ or $C_{i} \subseteq \overline{X_{1}}$. In particular, $C \cap C_{i} \subseteq X_{1}$ or $C \cap C_{i} \subseteq X_{2}$. We consider the following cases.
\begin{itemize}
\item \textbf{Case 1:} Suppose that $C \cap C_{i} \subseteq X_{1}$. As the configuration $\overline{\alpha}$ has been pebbled, we have by Lemma~\ref{lem:Connectivity} that $\chi_{2,O(\log n)}^{\overline{\alpha}}((u_{1}, u_{1})) \neq \chi_{2,O(\log n)}^{\overline{\alpha}}((u_{2}, u_{2}))$ for all $u_{1} \in C_{i}$ and all $u_{2} \in V(G) \setminus C_{i}$. It follows that:
\[
(G[C_{i}], \chi_{2,O(\log n)}^{\overline{\alpha}}) \not \cong (H[C_{i}'], \chi_{2,O(\log n)}^{\overline{\alpha'}}).
\]

\noindent If we do not have both $v \in C_{i}$ and $v' \in C_{i}'$, then Spoiler pebbles $w \mapsto f'(w)$. Define:
\[
\overline{\alpha}_{1} := \begin{cases}
\overline{\alpha} & : v \in C_{i} \text{ and } v' \in C_{i}', \\
(\overline{\alpha}, w) & : \text{otherwise}.
\end{cases}
\]

\noindent Define $\overline{\alpha'_{1}}$ analogously. Observe that:
\[
(G[C_{i}], \chi_{2,O(\log n)}^{\overline{\alpha}_{1}}) \not \cong (H[C_{i}'], \chi_{2,O(\log n)}^{\overline{\alpha'}_{1}}).
\]

\noindent We again consider the flip extensions $\overline{s_{1}}, \overline{s_{1}'}$. We have that $C_{i}$ forms a connected component in $G^{\overline{s_{1}}}$, and similarly $C_{i}'$ forms a connected component in $H^{\overline{s_{1}}}$. Thus, we may remove pebbles from outside of $C_{i}$ (respectively $C_{i}'$) without changing whether $\chi_{2,O(\log n)}$ distinguishes $G[C_{i}]$ and $H[C_{i}']$. So we may remove all pebbles $\overline{a}, \overline{b}, \overline{a}_{2}, \overline{b}_{2}$. If $w \mapsto f'(w)$ was additionally pebbled, we may also remove the pebble pair $v \mapsto v'$. For convenience, we label:
\[
z = \begin{cases} v & : v \in C_{i} \text{ and } v' \in C_{i}', \\ 
w & : \text{otherwise}.
\end{cases}
\]

\noindent Define $z'$ analogously. So now we have that:
\[
(G[C_{i}], \chi_{2,O(\log n)}^{(\overline{a}_{1}, \overline{b}_{1}, z)}) \not \cong (H[C_{i}'], \chi_{2,O(\log n)}^{(\overline{a'_{1}}, \overline{b'_{1}}, z')});
\]

\noindent otherwise, by \Cor{cor:GN3.12}, Spoiler can win with $2$ pebbles (reusing pebbles we have removed) and $O(\log n)$ additional rounds. Thus, we have not used any additional pebbles in this case. We now apply the inductive hypothesis to $t_{1}$ to deduce that Spoiler wins from $((\overline{a}_{1}, \overline{b}_{1}, z), (\overline{a'_{1}}, \overline{b'_{1}}, z'))$.

\item \textbf{Case 2:} Suppose $C \cap C_{i} \subseteq X_{2}$. As $X_{1}$ is defined with respect to the flip extension $\overline{s_1}$, this case is not symmetric with respect to Case 1. Define $M := C \cap C_{i}$ and $M' := C' \cap C_{i}'$.

Spoiler begins by pebbling $w \mapsto f'(w)$ (recall that $w$ was defined right above Equation~(\ref{eq:BeforeCase1})). Now consider the flip extension $\overline{s_{2}}$ (recall that $\overline{s_2}, \overline{s_{2}'}$ were defined above). As the configuration $\overline{\alpha}$ has been pebbled and by Lemma~\ref{lem:Connectivity}, we have that $\chi_{2,O(\log n)}^{\overline{\alpha}}((u_{1}, u_{1})) \neq \chi_{2,O(\log n)}^{\overline{\alpha}}((u_{2}, u_{2}))$ for all $u_{1} \in M$ and all $u_{2} \in V(G) \setminus M$.

Let $f'' : V(G) \to V(H)$ be the bijection that Duplicator selects at the next round. As $\overline{\alpha}$ has been pebbled, we may assume that $f''(M) = M'$; otherwise, Spoiler pebbles some element of $M$ that does not map to an element of $M'$. Then by \Lem{lem:Connectivity}, Spoiler wins with $1$ pebble (removing a pebble of $\overline{\alpha}_{1}$) and $O(\log n)$ additional rounds. Let $\{ D_{1}, \ldots, D_{q}\}$ be the  components of $\text{Comp}(G, \overline{s}_{2})$ that have non-empty intersection with $M$, and define $\{ D_{1}', \ldots, D_{q}'\}$ to be the components of $\text{Comp}(H, \overline{s_{2}'})$ that have non-empty intersection with $M'$.

By Lemma~\ref{lem:PreservesIsomorphism} and as $\overline{\alpha}, w$ have been pebbled, we have that:
\[
(G^{\overline{s_{2}}}[M], \chi_{2,O(\log n)}^{(\overline{\alpha}, w)}) \not \cong (H^{\overline{s_{2}'}}[M'], \chi_{2,O(\log n)}^{(\overline{\alpha}, w)}).
\]

\noindent By \Lem{Observation4.3}, there exists some $z \in M$ such that:
\[
(G^{\overline{s_{2}}}[M \cap D_{j}], \chi_{2,O(\log n)}^{(\overline{\alpha}, w)}) \not \cong (H^{\overline{s_{2}'}}[M' \cap D_{j}'], \chi_{2,O(\log n)}^{(\overline{\alpha}, w)}),
\]

\noindent where $j \in [q]$ is the unique component containing $z$ and $D_{j}'$ is the corresponding component containing $f''(z)$. It follows that:
\[
(G^{\overline{s_{2}}}[M \cap D_{j}], \chi_{2,O(\log n)}^{(\overline{\alpha}, w, z)}) \not \cong (H^{\overline{s_{2}'}}[M' \cap D_{j}'], \chi_{2,O(\log n)}^{(\overline{\alpha}, w, f''(z))}).
\]

\noindent Applying Lemma~\ref{lem:PreservesIsomorphism} again, we have that:
\[
(G[M \cap D_{j}], \chi_{2,O(\log n)}^{(\overline{\alpha},w, z)}) \not \cong (H[M' \cap D_{j}'], \chi_{2,O(\log n)}^{(\overline{\alpha}, w, f''(z))}).
\]

\noindent Now if $w \in D_{i}$ and $w' \in D_{i}'$, Spoiler does not place an additional pebble. Otherwise, Spoiler pebbles $z \mapsto f''(z)$. For convenience, we define:
\[
x = \begin{cases} w & : w \in D_{j} \text{ and } w' \in D_{j}' \\
z & : \text{otherwise.}
\end{cases}
\]

\noindent Define $x'$ analogously. Now as
\[
(G[M \cap D_{j}], \chi_{2,O(\log n)}^{(\overline{\alpha},w, z)}) \not \cong (H[M' \cap D_{j}'], \chi_{2,O(\log n)}^{(\overline{\alpha}, w, f''(z))}),
\]

\noindent we have that:
\[
(G[M \cap D_{j}], \chi_{2,O(\log n)}^{(\overline{\alpha},w, x)}) \not \cong (H[M' \cap D_{j}'], \chi_{2,O(\log n)}^{(\overline{\alpha}, w, x)}).
\]

\noindent Now in the flipped graph $G^{\overline{s_{2}}}$, $D_{i}$ forms a connected component. Similarly, in $H^{\overline{s_{2}'}}$,  $D_{i}'$ forms a connected component. So removing any pebbles from outside of $D_{i}$ (respectively, $D_{i}'$) does not affect whether $\chi_{2,O(\log n)}$ distinguishes $D_{i}$ from $V(G) \setminus D_{i}$ (respectively, whether $\chi_{2,O(\log n)}$ distinguishes $D_{i}'$ from $V(H) \setminus D_{i}'$). So we may remove all pebbles, so that only $(\overline{a}_{2}, \overline{b}_{2}, x) \mapsto (\overline{a'_{2}}, \overline{b'_{2}}, x')$ remain pebbled and still obtain that:
\[
(G[D_{j}], \chi_{2,O(\log n)}^{(\overline{a}_{2}, \overline{b}_{2} x)}) \not \cong (H[D_{j}'], \chi_{2,O(\log n)}^{(\overline{a'_{2}}, \overline{b'_{2}}, x')}).
\]

\noindent Otherwise, Spoiler wins using $2$ pebbles we have removed and $O(\log n)$ additional rounds. We now apply the inductive hypothesis to $t_{2}$, to deduce that Spoiler wins from $((\overline{a}_{2}, \overline{b}_{2}, z), (\overline{a'_{2}}, \overline{b'_{2}}, z'))$.
\end{itemize}

\noindent \\ So by induction, Spoiler has a winning strategy. It remains to analyze the round and pebble complexities. We first claim that only $O(\log n)$ rounds are necessary. At each node of the host tree, only a constant number of rounds are necessary unless (i) Duplicator selects a bijection that does not respect connected components, or (ii) we apply Corollary~\ref{cor:GN3.12}. Note that either case can only happen once. If Duplicator selects a bijection that does not respect connected components, then by \Lem{lem:Connectivity}, Spoiler wins with $O(\log n)$ rounds. Thus, as the height of $T$ is $O(\log n)$, only $O(\log n)$ rounds are necessary.

We will now analyze the number of pebbles, following the same strategy as Grohe \& Neuen \cite{grohe2019canonisation}. We can pebble $(\overline{a}, \overline{b}, \overline{a}_{1}, \overline{b}_{1}, \overline{a}_{2}, \overline{b}_{2})$ using $12k$ pebbles, as $(T, \gamma)$ has width at most $2k$. As $\overline{a} \subseteq \overline{a}_{1} \cup \overline{a}_{2}$, we need not pebble $\overline{a}$ and so can use only $10k$ pebbles. By \Lem{lem:ExistNice}, Spoiler can choose nice split pairs $(\overline{a}, \overline{b})$ and $(\overline{a}_{i}, \overline{b}_{i})$ such that additionally $\overline{b}_{i} \cap \overline{X} \subseteq \overline{b}$. So $\overline{b}_{i} \subseteq \overline{b} \cup \overline{a}_{3-i}$. This brings us down to $6k$ pebbles.

At most two of $x, v, w$ are pebbled at a given round. By \Lem{lem:ExistNice}, we can remove pebbles from $\overline{b}_{2}$ in Case 1 or $\overline{b}_{1}$ in Case 2. So only one additional pebble is necessary. Furthermore, if Duplicator selects bijections that do not respect connected components corresponding to pebbled vertices, then Duplicator can remove pebbles from $\overline{b}_{2}$ in Case 1 or $\overline{b}_{1}$ in Case 2 to win (see \Lem{lem:Connectivity}). So at most $1$ additional pebble is required. Thus, Spoiler has a winning strategy with $6k+3$ pebbles on the board and $O(\log n)$ rounds. The result follows.
\end{proof}

\begin{remark}
If we use a rank decomposition of width $k$, then we are able to obtain a slight improvement upon \cite[Theorem~4.1]{grohe2019canonisation}, using $(3k+3)$-WL (without controlling for rounds) rather than $(3k+4)$-WL. 
\end{remark}

\begin{corollary}
The Weisfeiler--Leman dimension of graphs of rank-width $k$ is  $\leq 3k+3$.
\end{corollary}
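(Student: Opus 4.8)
The plan is to replay the pebbling argument from the proof of \Thm{thm:WLRankWidth} almost verbatim, with the single change that Spoiler plays along a rank decomposition of width $k$ rather than the width-$2k$, $O(\log n)$-height decomposition obtained from \cite[Theorem~5]{CourcelleKante2007}. Since $\rw(G)\le k$, by definition $G$ admits a rank decomposition $(T,\gamma)$ with $\text{wd}(T,\gamma)\le k$; crucially, we now impose no constraint on the height of $T$. At a node $t$ with $X=\gamma(t)$, having already pebbled a split pair $(\overline a,\overline b)$ for $X$ together with a vertex $v\in\gamma(t)$ whose flip-component $C$ carries the non-isomorphism, Spoiler pebbles nice split pairs $(\overline a_1,\overline b_1),(\overline a_2,\overline b_2)$ for the two children $t_1,t_2$ (via \Lem{lem:ExistNice}), uses \Lem{lem:GN3.6} and \Lem{Observation4.3} to locate a child $t_i$ and a flip-component refining $C$ on which the non-isomorphism survives, discards the pebbles that are no longer needed, and recurses. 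The correctness of each step — in particular that removing pebbles lying outside the relevant flip-component $C_i$ (or $D_j$) preserves non-isomorphism of the induced colored subgraphs — is exactly \Lem{lem:GN3.6} and \Cor{cor:GN3.12}, and is independent of whether the decomposition has small height.

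The substance of the argument is the pebble count, which now runs with $k$ in place of $2k$. Each of $\overline a,\overline b,\overline a_1,\overline b_1,\overline a_2,\overline b_2$ has at most $\rho_G(\cdot)\le k$ vertices, giving a naive bound of $6k$. Because niceness gives $\overline a\subseteq\overline a_1\cup\overline a_2$, Spoiler need not pebble $\overline a$; and because \Lem{lem:ExistNice} additionally yields $\overline b_i\cap\overline X\subseteq\overline b$, combined with the niceness condition $\overline b_i\cap X_{3-i}\subseteq\overline a_{3-i}$ we obtain $\overline b_i\subseteq\overline b\cup\overline a_{3-i}$, so it suffices to keep $\overline b,\overline a_1,\overline a_2$ — that is, $3k$ pebbles of infrastructure. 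As in the proof of \Thm{thm:WLRankWidth}, at most two of the ``current/witness'' vertices are simultaneously on the board, and each of the two exceptional manoeuvres (invoking \Lem{lem:Connectivity} when Duplicator violates component structure, and the component-reduction of \Cor{cor:GN3.12}) costs at most one further pebble, which can be a pebble just removed. This contributes $3$ more, for a total of $3k+3$; hence $(3k+3)$-WL distinguishes $G$ from any non-isomorphic $H$, which is the claim.

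What is forfeited is the bound on the number of rounds: a width-$k$ rank decomposition may have height $\Theta(n)$, and the applications of \Lem{lem:Connectivity}, while each costing only $O(\log n)$ rounds, are not bounded in number, so the total round count degrades. Since Weisfeiler--Leman \emph{dimension} is insensitive to the number of rounds, this loss is immaterial; and because rounds no longer need to be tracked, the auxiliary $(2,O(\log n))$-WL refinement used in \Thm{thm:WLRankWidth} to separate a flip-component $C$ from $V(G)\setminus C$ after individualizing a vertex of $C$ may simply be replaced by running $1$-WL to its stable coloring, exactly as in \cite{grohe2019canonisation}, so the dimension is unaffected by that device as well. The only point demanding genuine care is the bookkeeping inherited from \Thm{thm:WLRankWidth}: one must verify that the representative vertex carried into the recursion actually lies in $X_i$ so that the inductive hypothesis at $t_i$ applies, and that after each round of pebble removals the set of pebbled vertices still contains everything the next recursive call requires. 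These checks are identical to those in the proof of \Thm{thm:WLRankWidth}, merely with every occurrence of $2k$ replaced by $k$, and they deliver the bound $\le 3k+3$ on the Weisfeiler--Leman dimension of graphs of rank width $k$.
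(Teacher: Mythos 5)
Your proposal is correct and follows essentially the same route as the paper: the paper's justification (given in the remark preceding the corollary) is precisely to rerun the pebbling strategy of Theorem~\ref{thm:WLRankWidth} on a width-$k$ rank decomposition with no height constraint, sacrifice the $O(\log n)$ round bound (which is irrelevant to WL dimension), and redo the pebble count with $k$ in place of $2k$ to get $3k$ infrastructure pebbles plus $3$ working pebbles. Your accounting of why $\overline{b},\overline{a}_1,\overline{a}_2$ suffice via niceness and Lemma~\ref{lem:ExistNice} matches the paper's own reduction from $6k$ to $3k$ exactly.
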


\section{Canonical Forms in Parallel}

In this section, we will establish the following.

\begin{theorem} \label{thm:CanonicalForms}
Let $G$ be a graph on $n$ vertices, of rank-width $k$. We can compute a canonical labeling for $G$ using a $\textsf{TC}$ circuit of depth $O(\log^{2} n)$ and size $n^{O(16^k)}$.
\end{theorem}

For $k \in O(1)$, this yields an upper bound of $\textsf{TC}^{2}$. \\

\noindent We will prove Theorem~\ref{thm:CanonicalForms} via the individualization-and-refinement paradigm. Our strategy is similar to that of Köbler \& Verbitsky \cite{KoblerVerbitsky}, who established the analogous result for treewidth. We will begin by briefly recalling their approach. K\"obler \& Verbitsky began by enumerating ordered sequences of vertices of length $\leq k+1$, testing whether each such sequence disconnected the graph. In particular, Köbler \& Verbitsky crucially used the fact that a graph of treewidth $k$ admits a so-called \textit{balanced} separator $S$ of size $\leq k+1$, which splits $G$ into connected components each of size $\leq n/2$. Köbler \& Verbitsky then colored the vertices of each connected component of $G-S$ according to how they connected back to $S$. As graphs of bounded treewidth are hereditary (closed under taking induced subgraphs), Köbler \& Verbitsky were then able to recurse on the connected components. The existence of balanced separators guarantees that only $O(\log n)$ such recursive calls are needed.

Instead of relying on balanced separators, it is sufficient to guarantee that after $O(\log n)$ recursive calls, each connected component will be a singleton. To this end, we again leverage the result of \cite{CourcelleKante2007}, who showed that a graph of rank-width $k$ admits a rank decomposition $(T, \gamma)$ of width $\leq 2k$ and height $O(\log n)$.   

Thus, we would intuitively like to descend along such a rank decomposition $(T, \gamma)$ of width $\leq 2k$ and height $O(\log n)$. Fix a node $t \in V(T)$, and let $t_{1}$ be the left child and $t_{2}$ be the right child of $t$. We would then enumerate over all pairs of flip extensions $( (\overline{a_{1}}, \overline{b_{1}}, f_{1}), (\overline{a_{2}}, \overline{b_{2}}, f_{2}))$, where intuitively $\overline{s_{i}} := (\overline{a_{i}}, \overline{b_{i}}, f_{i})$ is a flip extension for $\gamma(t_{i})$. Then for each $i = 1,2$ and each component $C_{i} \in \text{Comp}(G[\gamma(t)], \overline{s_{i}})$, we apply the construction recursively. Note that we are not able to efficiently compute a rank decomposition of width $\leq 2k$ and height $O(\log n)$. Nonetheless, Lemma~\ref{lem:GNLem5.6} guarantees the existence of flip extensions that witness the decomposition of a fixed rank decomposition $(T, \gamma)$. Following an idea of Wagner \cite{WagnerBoundedTreewidth}, we  consider all possible flip extensions in parallel, and thus ensure that the flip extension which respects a fixed rank decomposition is considered by the algorithm. As we will show in Lemma~\ref{lem:ReturnsLabeling}, the existence of a rank decomposition of height $O(\log n)$ allows us to guarantee that at least one of the flip extensions considered by the algorithm will produce a labeling, and Lemma~\ref{lem:CanonicalLabeling} will then guarantee that the minimum such labeling (which is the labeling the algorithm will return) is in fact canonical. Now to the details.

We first show that we can enumerate the split pairs in a canonical manner. To this end, we will need the following lemma, which is essentially well-known amongst those working on the Weisfeiler--Leman algorithm (cf., \cite{ImmermanLander1990, grohe2019canonisation}).

Let $G$ be a graph. The $(k,r)$-Weisfeiler-Leman algorithm \textit{determines orbits of $\ell$-tuples} if, for every graph $H$, every $v \in V(G)^{\ell}$ and every $w \in V(H)^{\ell}$ such that $\chi_{k,r}(v) = \chi_{k,r}(w)$, there is an isomorphism $\varphi:V(G)\to V(H)$ such that $\varphi(v) = w$.
\begin{lemma} \label{lem:Orbits}
Let $\mathcal{C}$ be a class of graphs such that $(k,r)$-WL identifies all (colored) $G \in \mathcal{C}$. Then for any $\ell \geq 1$ and all (colored) $G \in \mathcal{C}$, $(k+\ell,r)$-WL determines the orbits of all $\ell$-tuples of vertices in $G$.
\end{lemma}

\begin{proof}
Let $G \in \mathcal{C}$ be an arbitrary (colored) graph, and let $H$ be an arbitrary graph. Let $\bar{v} \in V(G)^{\ell}, \bar{w} \in V(H)^{\ell}$. Suppose that the coloring $\chi_{k,r}$ computed by $(k+\ell,r)$-WL fails to distinguish $\bar{v}, \bar{w}$. Then $(k,r)$-WL fails to distinguish the colored graphs $(G, \chi_{k,r}^{\bar{v}})$ and $(H, \chi_{k,r}^{\bar{w}})$. As $(k,r)$-WL identifies all (colored) graphs in $\mathcal{C}$, we have that $(G, \chi_{k,r}^{\bar{v}}) \cong (H, \chi_{k,r}^{\bar{w}})$. So there is an isomorphism $\varphi : G \cong H$ mapping $\varphi(\bar{v}) = \bar{w}$. The result now follows.
\end{proof}

By Theorem~\ref{thm:MainParallel1}, we have that $(6k+3, O(\log n))$-WL identifies all graphs of rank-width $k$. As we will need to enumerate split pairs, which have length $\leq 4k$, we will run $(10k+3, O(\log n))$-WL at each stage. Lemma~\ref{lem:Orbits} ensures that enumerating the split pairs in color class order is canonical. Note that a flip function is represented as a tuple in $\{0,\ldots, n\}^{2^{4k}}$. So for a fixed split pair $(\overline{a}, \overline{b})$, we can canonically enumerate the flip functions in lexicographic order. Thus, flip extensions can be enumerated in a canonical order.

\begin{remark} \label{rmk:FlippedGraphs}
Now let $(\overline{a}, \overline{b})$ be a split pair on $G$ and $(\overline{c}, \overline{d})$ be a split pair on $H$ such that $\chi_{10k+3,O(\log n)}((\overline{a}, \overline{b})) = \chi_{10k+3,O(\log n)}((\overline{c}, \overline{d}))$. Let $f$ be a given flip function, and let $\overline{s} = (\overline{a}, \overline{b}, f), \overline{s'} = (\overline{c}, \overline{d}, f)$ be flip extensions. By Lemma~\ref{lem:Orbits}, there is an isomorphism mapping $(\overline{a}, \overline{b}) \mapsto (\overline{c}, \overline{d})$. Hence, Lemma~\ref{lem:PreservesIsomorphism} provides that the flipped graphs $G^{\overline{s}}, H^{\overline{s'}}$ are isomorphic whenever $G \cong H$. In particular, if there is an isomorphism $\varphi : G \cong H$ mapping $(\overline{a}, \overline{b}) \mapsto (\overline{c}, \overline{d})$, then $\varphi$ is also an isomorphism of $G^{\overline{s}}$ and $H^{\overline{s}'}$.
\end{remark}

\begin{lemma} \label{lem:WriteDownFlippedGraph}
Let $G$ be a graph, $X \subseteq V(G)$, and $\overline{s} = (\overline{a}, \overline{b}, f)$ be a flip extension for $G[X]$. We may write down the flipped graph $G^{\overline{s}}$ and identify the connected components of $G[X]^{\overline{s}}$ in $\textsf{L}$.
\end{lemma}

\begin{proof}
We note that $(\overline{a}, \overline{b})$ has $O(k)$ vertices, we can compute for any vertex $v$, $N(v) \cap (\overline{a} \cup \overline{b})$ in $\textsf{AC}^{0}$. Thus, we may in $\textsf{AC}^{0}$ compute for any vertex $w$, the equivalence class $[w]_{\approx_{\overline{s}}}$. Furthermore, we may compute $|N(v) \cap [w]_{\approx_{\overline{s}}}|$ in $\textsf{TC}^{0}$. Identifying the connected components of a graph is known to be $\textsf{L}$-computable \cite{Reingold}. The result now follows.
\end{proof}

We will now pause to outline the procedure for the reader. Let $\overline{s} := (\overline{a}, \overline{b}, f)$ be a flip extension for $V(G)$. We will first individualize $(\overline{a}, \overline{b})$ and apply $(10k+3, O(\log n))$-WL to $G$. For each component $C \in \text{Comp}(G, \overline{s})$, this will encode the isomorphism class of $G[C]$ (as $(6k+3, O(\log n))$-WL identifies all graphs of rank-width $\leq k$-- see Theorem~\ref{thm:WLRankWidth}), as well as how $G[C]$ connects back to the rest of $G$. It is easy to see that for any two vertices $v, w$, if $v, w$ receive the same color under $\chi_{10k+3, O(\log n)}^{(\overline{a}, \overline{b})}$, then the following conditions hold:
\begin{enumerate}[label=(\alph*)]
    \item $N(v) \cap (\overline{a} \cup \overline{b}) = N(w) \cap (\overline{a} \cup \overline{b})$, and
    \item For any vertex $u$, $|N(v) \cap [u]_{\approx \overline{s}}| = |N(w) \cap [u]_{\approx \overline{s}}|$.
\end{enumerate}

\noindent Intuitively, this coloring encodes how each given vertex connects to the rest of $G$. Precisely, let $G \cong H$ be graphs of rank-width $\leq k$, and suppose that the algorithm returns the labeling $\lambda : V(G) \to [n]$ for $G$ and labeling $\kappa : V(H) \to [n]$ for $H$ (where $n = |G| = |H|$). If $v, w \in V(G)$ belong to different components of $\text{Comp}(G, \overline{s})$, then we need to ensure that $\{v,w\} \in E(G)$ if and only if $\{(\kappa^{-1} \circ \lambda)(v), (\kappa^{-1} \circ \lambda)(w)\}\in E(H)$. By the definition of the flipped graph (see Section~\ref{sec:WLRankWidth}), conditions (a) and (b) determine precisely whether  $\{v,w\} \in E(G)$.

By Lemma~\ref{lem:WriteDownFlippedGraph}, we may write down the connected components for the flipped graph $G^{\overline{s}}$ in $\textsf{L}$. We will then sort these connected components in lexicographic order by color class, which is $\textsf{L}$-computable. It may be the case that for two connected components $C_{i}, C_{j} \in \text{Comp}(G, \overline{s})$, $G[C_{i}]$ and $G[C_{j}]$ are isomorphic and connect to the rest of $G$ in the same way, and so receive the same multiset of colors. In this case, we may arbitrarily choose whether $G[C_{i}]$ will be sorted before $G[C_{j}]$. The output will not depend on this particular choice, as there is an automorphism of $G$ which exchanges the two components. Now for each $C \in \text{Comp}(G, \overline{s})$, we will apply the procedure recursively on $G[C]$, incrementing the local depth variable by $1$. If for each connected component of $\text{Comp}(G, \overline{s})$ we are given a valid labeling, we may recover a labeling for $G$ as follows. Let $C_{j} \in \text{Comp}(G, \overline{s})$, with the labeling function $\ell_{j} : V(C_{j}) \to \{ 1, \ldots, |V(C_{j})| \}$ returned by applying our canonization procedure recursively to $G[C_{j}]$. Let $h_{j} := |C_{1}| + \cdots + |C_{j-1}|$. We will recover a canonical labeling $\ell : V(G) \to [n]$ by, for each such $j$ and $v\in C_j$, setting $\ell(v) := \ell_{j}(v) + h_{j}$. As each vertex of $G$ appears in exactly one $C_{j}$, $\ell$ is well-defined.

We stress here again that the recursive calls to the canonization procedure track the depth to ensure that we do not make $\geq 3 \cdot (\log(n) + 1)$ recursive calls. If the depth parameter is ever larger than $ 3 \cdot (\log(n) + 1)$, then the algorithm returns $\perp$ to indicate an error. In the recombine stage of our divide and conquer procedure, if any of the labelings returned for the components of $\text{Comp}(G, \overline{s})$ are $\perp$, then the algorithm simply returns $\perp$. Thus, a priori, our algorithm may not return a labeling of the vertices. We will prove later (see Lemma~\ref{lem:ReturnsLabeling}) that our algorithm actually does return a labeling.

We now give a more precise description of our algorithm and proceed to prove its correctness. We define a canonical labeling $\text{Can}(G)$ of a graph, via a subroutine $\text{Can}(G,d)$. The subroutine $\text{Can}(G,d)$ takes an $n$-vertex graph $G$ and a depth parameter $d$, and outputs either a bijection $\lambda:V(G)\to [n]$ or a failure symbol $\bot$.

In pseudocode, our canonical labeling subroutine works as follows: 
\begin{algorithm} $\textbf{Can}(G,d)$:
\noindent \\ \textbf{Input:} A colored graph $G = (V,E, \chi)$ of rank-width $\leq k$, and a parameter $d$ for depth. 
\begin{enumerate}\label{algo main}
	\item If $d > 3 \cdot (\log(n) + 1)$, return $\bot$.\label{line excess depth}
	\item If $d \leq 3 \cdot (\log(n) + 1)$ and $|V| = 1$, return $\lambda(v) = 1$. \label{line single vertex}
	\item Otherwise, if $d \leq 3 \cdot (\log(n) + 1)$ and $|V| > 1$, do the following steps:\label{line main case}
    \item Run $(10k+3, O(\log n))$-WL on $G$. \label{line run WL initially}
	\item In parallel, enumerate all possible flip extensions $\overline{s}  =(\overline{a},\overline{b},f)$ in lexicographic order, where the order on $(\overline{a}, \overline{b})$ is considered with respect to the ordering induced by the coloring $\chi_{10k+3, O(\log n)}$ (by \cite{GroheVerbitsky}, the colors are represented by numbers, and so color class order is well-defined).\label{line enumerate flip extensions}
	\item For each flip extension, $\overline{s} = (\overline{a},\overline{b},f)$,
		\begin{enumerate}
                \item Compute the coloring $\chi_{10k+3,O(\log(n))}^{(\overline{a},\overline{b})}$ applied to $G$. \label{line run WL for each extension}
                
				\item Construct the flipped graph $G^{\overline{s}}$.\label{line construct flipped graph}
				\item Compute the set of connected components $\text{Comp}(G,\overline{s})$. If $G^{\overline{s}}$ is connected, then return $\bot$. Note that there exists a rank decomposition $(T, \gamma)$ in which for all $u, v \in V(T)$, $\gamma(u) \neq \gamma(v)$. So there exists a flip extension $\overline{s}$ that splits  $G^{\overline{s}}$ into at least two connected components. \label{line find components of flipped graph}
                
				\item Order the components $C\in \text{Comp}(G,f)$ by lexicographic ordering of the multiset of colors $\chi_{10k+3,O(\log(n))}^{\overline{a},\overline{b}}(G[C])$. Let $C_1,\dots,C_\ell$ be the components in this ordering.\label{line order components}
				\item Compute $\text{Can}(d + 1,G[C_1]),\dots,\text{Can}(d + 1,G[C_\ell])$ and let $\lambda_{\overline{s},1},\dots,\lambda_{\overline{s},\ell}$ be the resulting labelings.\label{line recursive call}
				\item If $\lambda_{\overline{s},i} = \bot$ for any $i\in [\ell]$ set $\lambda_{\overline{s}} = \bot$. Otherwise, if $\lambda_{\overline{s}, 1},\dots,\lambda_{\overline{s}, \ell}$ are the (canonical) labelings returned by the recursive calls, set 
				\[
				\lambda_{\overline{s}}(v) = \lambda_{\overline{s}, i}(v) + \sum_{j = 1}^{i - 1} |C_j|
				\] where $C_i \ni v$. \label{line recombine step}
		\end{enumerate}
	\item Return the labeling $\lambda_{\overline{s}}$ corresponding to the first flip extension $\overline{s}$ (relative to the order in which the flip extensions were enumerated) that is not $\bot$. \label{line final return}  
\end{enumerate}
\end{algorithm}

We then define the canonical labeling by setting $\text{Can}(G) = \text{Can}(G,0)$ via the subroutine. We now show that our subroutine satisfies the desired properties. 

\begin{lemma} \label{lem:ReturnsLabeling}
	If $G$ is a graph of rank-width at most $k$, the above procedure terminates and does not return $\bot$. 
\end{lemma}
\begin{proof}
	For termination, we observe that at each step the depth parameter $d$ increases and that if $d$ becomes larger than $3\log(n) + 1$, the procedure returns. Hence, the procedure must terminate.

 We will now show by induction that Algorithm~\ref{algo main} returns a labeling instead of $\bot$. Fix $(T, \gamma)$ to be a rank decomposition of $G$, of width $\leq 2k$ and height $\leq 3 \cdot (\log(n) + 1)$. Let $t \in V(T)$, and let $t_{1}, t_{2}$ be the children of $t$ in $T$. We will use Lemma~\ref{lem:GNLem5.6}, which provides that for each $t \in V(T)$, there exists a flip extension $\overline{s}$ so that for every $C \in \text{Comp}(G[\gamma(t)], \overline{s})$, there exists an $i = 1, 2$ such that $C \subseteq \gamma(t_{i})$. We will use this to show that the algorithm constructs a non-empty set of labelings for $G$. As the algorithm chooses the least such labeling (with respect to the order in which the flip extensions were enumerated-- See Algorithm~\ref{algo main}, Line 5), it follows that the algorithm in fact returns a labeling. Note that while the algorithm will not be explicitly constructing $(T, \gamma)$, the algorithm still descends along $(T, \gamma)$ in one of its parallel computations.
 
 We consider first the case when $|V(G)| = 1$. In this case, the algorithm returns $\lambda(v) = 1$, where $v \in V(G)$. Now fix a node $t \in V(T)$, and let $\gamma(t)$ be the corresponding set of vertices. Suppose that $|\gamma(t)| > 1$. Let $t_{1}, t_{2}$ be the children of $t$ in $T$. By Lemma~\ref{lem:GNLem5.6}, there exists a flip extension $\overline{s} = (\overline{a}, \overline{b}, f)$ such that for every component $C \in \text{Comp}(G[\gamma(t)], \overline{s})$, either $C \in \gamma(t_{1})$ or $C \in \gamma(t_{2})$. As we consider all flip extensions of $\gamma(t)$ in parallel, one of our parallel computations will consider $\overline{s}$. We will analyze this parallel computation.

 Prior to recursively invoking the algorithm on each $G[C]$ ($C \in \text{Comp}(G[\gamma(t)], \overline{s})$), the algorithm first sorts said components. (For the purposes of showing that the algorithm yields a (not necessarily canonical) labeling, the precise ordering does not matter. We will argue later that the ordering used by the algorithm is canonical-- see Lemma~\ref{lem:CanonicalLabeling}.) For each $C \in \text{Comp}(G[\gamma(t)], \overline{s})$, the algorithm is then applied recursively to $G[C]$. 

 Now for $i = 1, 2$, let $C_{i,1}, \ldots, C_{i,j_{i}} \in \text{Comp}(G[\gamma(t)], \overline{s})$ be precisely the components in $\gamma(t_{i})$. Observe that a flip extension on $\gamma(t_{i})$ restricts to a flip extension on an individual component $C_{i,h}$ ($h \in [j_i]$). Conversely, given flip extensions $\overline{s}_{i,h}$ ($h \in [j_i]$), the union of these flip extensions induce a flip extension $\overline{s}$ on $\gamma(t_{i})$. 

 By Lemma~\ref{lem:GNLem5.6}, there exists a flip extension $\overline{s}_{i}$ such that for every component $C' \in \text{Comp}(G[\gamma(t_{i})], \overline{s_{i}})$, $C' \in \gamma(t_{i,1})$ or $C' \in \gamma(t_{i,2})$. Suppose that $s_{i}$ is the union of the flip extensions $(\overline{s}_{i,h})_{h \in [j_{i}]}$.  As, for each $h \in [j_{i}]$, the recursive call of the algorithm applied to $C_{i,h}$ will consider all flip extensions of $C_{i,h}$ in parallel. Thus, via the recursive calls to the components $C_{i,h}$ ($h \in [j_{i}]$), the algorithm will consider all flip extensions of $\gamma(t_{i})$, including the flip extension $\overline{s_{i}}$. Thus, some parallel choice will descend along $(T, \gamma)$, and so we may assume that the algorithm computes a labeling for each $C \in G([\gamma(t)], \overline{s})$. As these components are disjoint and listed in a fixed order, the algorithm in fact computes a labeling for $\gamma(t)$. 

 The result now follows by induction.
\end{proof}

\begin{lemma} \label{lem:CanonicalLabeling}
	Let $G$ be a colored graph of rank-width at most $k$ and let $H$ be an arbitrary graph. If $\lambda : V(G) \to [n]$ and $\kappa : V(H) \to [n]$ are the labelings output by Algorithm \ref{algo main} on $G$ and $H$ respectively, then $G\cong H$ if and only if the map $\kappa^{-1} \circ \lambda$ is an isomorphism. 
\end{lemma}
\begin{proof}
If $\kappa^{-1} \circ \lambda$ is an isomorphism, then clearly $G \cong H$. Thus, it suffices to show that if $G \cong H$, then $\kappa^{-1} \circ \lambda$ is an isomorphism.

The proof is by induction on the number of vertices in $G$. Assume that $|V(G)| = 1$ and $G\cong H$. Note that the algorithm returns $\lambda = \kappa = \text{id}$, the identity permutation, on a graph with one vertex. Thus, $\kappa^{-1} \circ \lambda$ is an isomorphism as desired.

Now suppose that $|V(G)| > 1$. Let $\lambda$ be the labeling returned for $G$ and $\kappa$ the labeling returned for $H$ (recall that by  Lemma~\ref{lem:ReturnsLabeling}, and the fact that $G$ has rank-width at most $k$, we have that $\lambda \neq \bot$ and thus $\kappa \neq \bot$). Let $(\overline{a}, \overline{b})$ be the split pair the algorithm selects for $\lambda$ on the initial call (that is, when the algorithm is invoked on $G$ with $\text{depth} = 0$). We note that, by the algorithm, $\chi_{10k+3, O(\log n)}( (\overline{a}, \overline{b}))$ belongs to the minimal color class where a labeling was returned. Let $(\overline{a}', \overline{b}')$ be the corresponding split pair of $H$ selected for $\kappa$. 

We now claim that $(10k +3,O(\log(n))$-WL must assign the same color to the tuples $(\overline{a},\overline{b})$ and $(\overline{a}',\overline{b}')$.  Suppose for contradiction that the tuples receive different colors, and assume without loss of generality that $(\overline{a},\overline{b})$ receives a color lower than the color assigned to $(\overline{a}',\overline{b}')$.  By Theorem~\ref{thm:WLRankWidth}, $(6k+3, O(\log n))$-WL identifies all (colored) graphs of rank-width $\leq k$. Hence there is a tuple $(\overline{u},\overline{v})$ of vertices in $H$ such that $\chi_{10k+3,O(\log(n))}((\overline{a},\overline{b})) = \chi_{10k+3,O(\log(n))}((\overline{u},\overline{v}))$. Thus, by Lemma~\ref{lem:Orbits}, there is an isomorphism $\varphi : V(G) \to V(H)$ such that $\varphi((\overline{a}, \overline{b})) = (\overline{u}, \overline{v})$. As $G \cong H$, either the algorithm returns $\bot$ when run on $H$ with the split pair $(\overline{u},\overline{v})$ or the algorithm returns a labeling $\kappa'$. In the first case, since all choices of the algorithm may be mapped back to $G$ via the automorphism $\varphi$, it would follow that the algorithm on $G$ with split pair $(\overline{a},\overline{b})$ did not return a labeling, contradicting the assumption that the algorithm returned the labeling $\lambda$ for $G$. As $(\overline{a},\overline{b})$ is the split pair responsible which produced the overall labeling of $G$, we must turn to the case where the algorithm returns a labeling on $H$ with split pair $(\overline{u},\overline{v})$. However, since $ \chi_{10k+3,O(\log(n))}((\overline{u},\overline{v})) = \chi_{10k+3,O(\log(n))}((\overline{a},\overline{b})) < \chi_{10k+3,O(\log(n))}((\overline{a}',\overline{b}'))$, the algorithm should have used $(\overline{u},\overline{v})$ for its final split pair, which contradicts the minimality of $(\overline{a}',\overline{b}')$.  Thus we can conclude that $\chi_{10k+3,O(\log(n))}((\overline{a},\overline{b})) = \chi_{10k+3,O(\log(n))}((\overline{a}',\overline{b}'))$. In particular, there is an isomorphism 
\[
\varphi : (G, \chi_{6k+3, O(\log n)}^{(\overline{a}, \overline{b})}) \cong (H, \chi_{6k+3, O(\log n)}^{(\overline{a}', \overline{b}')}).
\]

\noindent Now as the algorithm enumerates the flip extensions in lexicographical order, it thereby considers the flip functions in lexicographical order. As the ordering on flip functions does not depend on the choice of split pair, and we have that $G\cong H$, the flip function $f : (2^{\overline{a} \cup \overline{b}})^{2} \to [n] \cup \{ \bot\}$ selected for $G$ will also be used for $H$ (here, we abuse $\overline{a} \cup \overline{b}$ to denote the indices of the vertices as they appear in $(\overline{a}, \overline{b})$). Write $\overline{s} := (\overline{a}, \overline{b}, f)$ and $\overline{s}' := (\overline{a}', \overline{b}', f)$.

The algorithm next computes the flipped graphs $G^{\overline{s}}$ and $H^{\overline{s}'}$. By Lemma~\ref{lem:PreservesIsomorphism}, we have that:
\begin{align} \label{eqn:FlippedIso}
(G^{\overline{s}}, \chi_{6k+3, O(\log n)}^{(\overline{a}, \overline{b})}) \cong (H^{\overline{s}'}, \chi_{6k+3, O(\log n)}^{(\overline{a}', \overline{b}')}).
\end{align}

\noindent It follows that $|\text{Comp}(G, \overline{s})| = |\text{Comp}(H, \overline{s}')|$. Denote $\ell := |\text{Comp}(G, \overline{s})|$. Label the components of $\text{Comp}(G, \overline{s})$ as $C_{1}, \ldots, C_{\ell}$, and the components of $\text{Comp}(H, \overline{s}')$ as $D_{1}, \ldots, D_{\ell}$. Furthermore, by (\ref{eqn:FlippedIso}), there exists a bijection $\psi : [\ell] \to [\ell]$ such that for all $i \in [\ell]$, $G[C_{i}] \cong H[D_{\psi(i)}]$. In particular, as we compute $\chi_{10k+3, O(\log n)}^{(\overline{a}, \overline{b})}$ at line 6(a), the isomorphism class of $G[C_{i}] \cong H[D_{\psi(i)}]$ takes into account how $G[C_{i}]$ connects to the rest of $G$ and how $H[D_{\psi(i)}]$ connects back to the rest of $H$ (see the discussion in the two paragraphs immediately below Lemma~\ref{lem:WriteDownFlippedGraph}).

As the algorithm sorts the components of $\text{Comp}(G, \overline{s})$ (respectively, $\text{Comp}(H, \overline{s}')$), we may without loss of generality take $\psi$ to be the identity permutation.

By the inductive hypothesis, we may assume that for each $i \in [\ell]$, the algorithm computes a labeling $\ell_{i} : C_{i} \to [|C_{i}|]$, a labeling $\kappa_{i} : \psi'(C_{i})) \to [|C_{i}|]$, and that $ \kappa_{i}^{-1} \circ \ell_{i}$ is an isomorphism. Now by construction, if $v \in C_{i}$, then 
\[
\lambda(v) = \lambda_{i}(v) + \sum_{j=1}^{i-1} |C_{j}|,
\]

\noindent and $\kappa$ is defined analogously. As $C_{i} \cap C_{h} = \emptyset$ (respectively, $D_{i} \cap D_{h} = \emptyset$) whenever $i \neq h$, $\lambda$ and $\kappa$ are well-defined. Furthermore, as $\kappa_{i}^{-1} \circ \lambda_{i}$ is an isomorphism of $G[C_{i}] \cong H[D_{i}]$ for each $i \in [\ell]$, $\kappa|_{H[D_{i}]}^{-1} \circ \lambda|_{G[C_{i}]}$ is an isomorphism of $G[C_{i}] \cong H[D_{i}]$. 

Now suppose that $v, w$ belong to different components of $\text{Comp}(G, \overline{s})$. Let $v' := (\kappa^{-1} \circ \lambda)(v)$ and $w' := (\kappa^{-1} \circ \lambda)(w)$. We will show that $vw \in E(G)$ if and only if $v'w' \in E(H)$. By the definition of the flipped graph (see Section~\ref{sec:WLRankWidth}), we can determine whether $vw \in E(G)$ based on $N(v) \cap (\overline{a} \cup \overline{b})$, $N(w) \cap (\overline{a} \cup \overline{b})$, and $|N(v) \cap [w]_{\equiv s}|$. All of this information is encoded in $\chi_{10k+3, O(\log n)}^{(\overline{a}, \overline{b})}((v,w))$, and $\chi_{10k+3, O(\log n)}^{(\overline{a}', \overline{b}')}((v,w))$. Thus, $vw \in E(G)$ if and only if $v'w' \in E(H)$. 

It follows that the map $\kappa^{-1} \circ \lambda$ is an isomorphism. The result now follows by induction.
\end{proof}

We now prove Theorem~\ref{thm:CanonicalForms}.
\begin{proof}[Proof of Theorem~\ref{thm:CanonicalForms}]
    Let $G$ be a graph with $n$ vertices and rank-width at most $k$, and let $\lambda:V(G)\to [n]$ be the output of $\text{Can}(G,3\log(n) + 1)$. By Lemma \ref{lem:ReturnsLabeling}, $\lambda$ is not $\bot$. By Lemma~\ref{lem:CanonicalLabeling}, $\lambda$ is a canonical labeling of $G$. Thus it remains to determine the complexity of the algorithm.

    At each recursive call to $\text{Can}(G,d)$, we invoke $(10k+3, O(\log n))$-WL on $G$, once at line (\ref{line run WL initially}), and then in parallel for each flip extension. By the parallel WL implementation due to Grohe \& Verbitsky \cite{GroheVerbitsky}, our calls to $(10k+3, O(\log n))$-WL are $\textsf{TC}^{1}$-computable. By Lemma~\ref{lem:WriteDownFlippedGraph}, we may write down the flipped graph $G^{\overline{s}}$ and identify its connected components in $\textsf{L}$. Thus, the non-recursive work within a single call to $\text{Can}(G,d)$ is $\textsf{TC}^{1}$-computable. Now $\text{Can}(G,d)$ makes $n^{O(16^{k})}$ recursive calls. The height of our recursion tree is $O(\log n)$. Thus, our circuit has size $n^{O(16^{k})}$, as desired. In particular, for fixed $k$, our algorithm is $\textsf{TC}^{2}$-computable.
\end{proof}

\section{Logarithmic Weisfeiler--Leman and Treewidth}

In the process of our work, we came across a way to modestly improve the descriptive complexity for graphs of bounded treewidth. Our main result in this section is the following.

\begin{theorem} \label{thm:Treewidth}
The $(3k+6)$-dimensional Weisfeiler--Leman algorithm identifies graphs of treewidth $k$ in $O(\log n)$ rounds.
\end{theorem}

In order to prove \Thm{thm:Treewidth}, we utilize a result of~\cite{Bodlaender} that graphs of treewidth $k$ admit a \textit{binary} tree decomposition of width $\leq 3k+2$ and height $O(\log n)$. With this decomposition in hand, we leverage a pebbling strategy that is considerably simpler than that of Grohe \& Verbitsky~\cite{GroheVerbitsky}.

\begin{remark}
Grohe \& Verbitsky~\cite{GroheVerbitsky} previously showed that the $(4k+3)$-WL identifies graphs of treewidth $k$ in $O(\log n)$ rounds. As a consequence, they obtained the first $\textsf{TC}^{1}$ isomorphism test for this family. In light of the close connections between Weisfeiler--Leman and $\textsf{FO} + \textsf{C}$~\cite{ImmermanLander1990, CFI}, they also obtained that if $G$ has treewidth $k$, then there exists a $(4k+4)$-variable formula $\varphi$ in $\textsf{FO} + \textsf{C}$ with quantifier depth $O(\log n)$ such that whenever $H \not \cong G$, $G \models \varphi$ and $H \not \models \varphi$. In light of \Thm{thm:Treewidth}, we obtain the following improvement in the descriptive complexity for graphs of bounded treewidth.
\end{remark}

\begin{corollary} \label{cor:CorMain}
Let $G$ be a graph of treewidth $k$. Then there exists a formula $\varphi_{G} \in \mathcal{C}_{3k+7, O(\log n)}$ that identifies $G$ up to isomorphism. That is, for any $H \not \cong G$, $G \models \varphi_{G}$ and $H \not \models \varphi_{G}$.
\end{corollary}

We begin by introducing some useful lemmas.

\begin{lemma} \label{lem:separators}
Let $G, H$ be graphs. Suppose that a separator $S \subseteq V(G)$ has been pebbled. If the corresponding pebbled set $S' \subseteq V(H)$ is not a separator of $H$, then Spoiler can win with $3$ additional pebbles and $O(\log n)$ additional rounds.    
\end{lemma}

\begin{proof}
Spoiler begins by pebbling vertices $v, w$ in two different components of $G - S$. Let $(v', w')$ be the corresponding pebbled vertices in $V(H)$. As $S'$ is not a separator of $H$, there is a $v'-w'$ path in $H - S$. However, there is no $v-w$ path in $G - S$. The argument now follows  identically as in the proof of Lemma~\ref{lem:Connectivity}.     
\end{proof}

Let $G$ be a connected graph, and let $(T, \beta)$ be a tree decomposition of $G$. This next lemma states that if we have pebbled the vertices of some node $\beta(t)$, then Spoiler can force Duplicator to preserve a given subtree $T'$ (setwise) of the tree decomposition by pebbling some vertex $v \in V(G)$ where there exists $u \in V(T')$ such that $v \in \beta(u)$.

\begin{lemma} \label{lem:children}
Let $G$ be a connected graph, and let $(T, \beta)$ be the binary tree decomposition of $G$ afforded by \cite{Bodlaender}. Let $t \in V(T)$, and suppose that each vertex in $\beta(t)$ has been pebbled. Let $C$ be the connected component of $T - tu$ that contains $u$, and let $T' := C \cup tu$.

Let $v,w \in V(G)$ be vertices contained in the subgraph of $G$ induced by $T'$, such that $v,w \not \in \beta(t)$. Suppose that $(v,w) \mapsto (v', w')$ are pebbled. Let $f : V(G) \to V(H)$ be Duplicator's bijection. If $v', w'$ belong to different components of $H \setminus f( \beta(t) \setminus \beta(u))$, then Spoiler can win with $1$ additional pebble and $O(\log n)$ additional rounds.
\end{lemma}

\begin{proof}
This follows immediately from Lemma~\ref{lem:Connectivity} and Lemma~\ref{lem:separators}.    
\end{proof}

\begin{proof}[Proof of Theorem~\ref{thm:Treewidth}]
Let $G$ be a graph of treewidth at most $k$, and $H$ a graph not isomorphic to $G$. Without loss of generality, we may assume that $G$ is connected. Otherwise, Duplicator selects a bijection $f : V(G) \to V(H)$ mapping some vertex $v \in V(G)$ to a vertex $f(v)$, where the component of $G$ containing $v$ is not isomorphic to the component containing $f(v)$. Spoiler begins by pebbling $v$. We will be able to reuse this pebble later, and so it will not add to our count.

Let $(T, \beta)$ be a tree decomposition for $G$ of width $\leq 3k+2$ and height $O(\log n)$, with $T$ a binary tree, as prescribed by~\cite{Bodlaender}. Let $s$ be the root node of $T$. Spoiler begins by pebbling the vertices of $\beta(s)$, using $\leq 3k+3$ pebbles. Let $f : V(G) \to V(H)$ be Duplicator's bijection. If $G[\beta(s)] \not \cong H[f(\beta(s))]$, then Spoiler wins. So suppose that $G[\beta(s)] \cong H[f(\beta(s))]$. 

Let $\ell$ be the left child of $s$, and $r$ be the right child of $s$ in $T$. Denote $S_{\ell} := \beta(s) \cap \beta(\ell)$ and $S_{r} := \beta(s) \cap \beta(r)$. Now as $(T, \beta)$ is a tree decomposition, we have that for any edge $uv \in E(T)$, $\beta(u) \cap \beta(v)$ is a separator of $G$. By Lemma~\ref{lem:separators}, we may assume that Spoiler selects bijections $f : V(G) \to V(H)$ that preserve separators of pebbled vertices, or Spoiler wins with $3$ additional pebbles (for a total of $3k+6$ pebbles) and $O(\log n)$ additional rounds. So we may assume that $f(S_{\ell}), f(S_{r})$ are separators of $H$.

Using two additional pebbles (for a total of $3k+5$ pebbles), Spoiler pebbles vertices $v_{\ell}, v_{r} \in V(G)$ belonging to the left and right subtrees of $T$ from $s$ respectively (that is, there exist nodes $t_{1}, t_{2} \in V(T)$ such that $t_{1}$ is in the left subtree from $s$ and $v_{\ell} \in \beta(t_{1})$, and $t_{2}$ is in the right subtree from $s$ and $v_{r} \in \beta(t_{2})$). Let $C_{\ell}$ be the connected subgraph of $G$ induced by the left subtree of $T$ (where for clarity, $C_{\ell}$ includes $S_{\ell}$), and define $C_{r}$ analogously for the right subtree of $T$. 

By Lemma~\ref{lem:children}, Duplicator must select bijections that preserve $C_{\ell}, C_{r}$ setwise (or Spoiler wins with $1$ additional pebble, for a total of $3k+6$ pebbles, and $O(\log n)$ rounds). As $S_{\ell}, S_{r}$ are separators of $G$, we have that $C_{\ell} \cap C_{r} \subseteq \beta(s)$. Furthermore, as $f(S_{\ell}), f(S_{r})$ are separators of $H$, we have that $f(C_{\ell}) \cap f(C_{r}) \subseteq f(\beta(s))$. As $G \not \cong H$ and $G[\beta(s)] \cong H[f(\beta(s))]$, we necessarily have that $C_{\ell} \not \cong f(C_{\ell})$ or $C_{r} \not \cong f(C_{r})$. Without loss of generality, suppose that $C_{\ell} \not \cong f(C_{\ell})$. Spoiler now pebbles each vertex of $\beta(\ell)$, reusing $v_{r}$ and all but one pebble of $\beta(s) \setminus \beta(\ell)$. Note that as $\beta(s) \cap \beta(\ell)$ and a single element of $\beta(s) \setminus \beta(\ell)$ have remained pebbled, we have by Lemma~\ref{lem:children} that Duplicator must select bijections that map $C_{\ell} \mapsto f(C_{\ell})$ (setwise) and $V(G) \setminus C_{\ell} \mapsto V(H) \setminus f(C_{\ell})$ (setwise), or Spoiler wins with $1$ additional pebble and $O(\log n)$ rounds.

We may thus iterate on the above argument, starting from $\ell$ as the root node in our subtree in the tree decomposition $(T, \beta)$.  As $G \not \cong H$, we will eventually reach a stage (such as when all of $\beta(t)$ is pebbled for some leaf node $t \in V(T)$) where the map induced by the pebbled vertices does not extend to an isomorphism. Note that in our iterated strategy, we may reuse the pebble in $\beta(s)$ in an application of Lemma~\ref{lem:children} or Lemma~\ref{lem:separators} applied to the left or right children of $\ell$. This ensures that $\leq 3k+6$ pebbles will be on the board at any given round.

It remains to analyze the number of rounds. Observe that we use at most $O(\log n)$ rounds at a given node of $T$ as prescribed by Lemma~\ref{lem:separators} or Lemma~\ref{lem:children}; however, invoking either lemma results in Spoiler winning. Otherwise, we use only $O(k)$ rounds at that node of $T$. As $T$ has height $O(\log n)$ and $k$ is bounded, this results in $O(\log n)$ rounds, as desired. The result now follows.
\end{proof}

\section{Conclusion}

We showed that the $(6k+3, O(\log n))$-WL algorithm identifies graphs of bounded rank-width. As a consequence, we obtain a $\textsf{TC}^{1}$ upper bound for isomorphism testing of graphs of bounded rank-width. In the process, we also improved the Weisfeiler--Leman dimension from $3k+4$ \cite{grohe2019canonisation} to $3k+3$, though it is not known if even $(3k+4)$-WL can identify graphs of bounded rank-width in $O(\log n)$ rounds. We conclude with several open questions.

It would be of interest to close the gap between the $(3k+3)$-WL bound where the iteration number is unknown and our $(6k+3)$-WL upper bound to obtain $O(\log n)$ rounds. One possible strategy would be to show that the exists a rank decomposition of width $k$ where the host tree has height $O(\log n)$.

\begin{question}
Let $G$ be a graph of rank-width $k$. Does there exist a rank decomposition $(T, \gamma)$ of width $k < c < 2k$ such that $T$ has height $O(\log n)$?
\end{question}

Courcelle \& Kant\'e \cite{CourcelleKante2007} showed that a rank decomposition of width $2k$ exists with a host tree of height $3(\log(n) + 1)$. Decreasing the width to some $k \leq c \leq 2k$ at the cost of increasing the height of the host tree by a constant factor would immediately yield improvements. More generally, in light of the correspondence between WL and $\textsf{FO} + \textsf{C}$, the width of the rank decomposition corresponds to the number of variables, and the depth of the host tree corresponds to the quantifier depth in formulas characterizing these graphs. Thus, controlling the trade-off between the width of the rank decomposition and the height of the host tree would directly translate into a trade-off between the number of variables and the quantifier depth in our logical formula. 

We note that isomorphism testing for graphs of bounded treewidth \cite{ElberfeldSchweitzer} is $\textsf{L}$-complete under $\textsf{AC}^{0}$-reductions. As graphs of bounded treewidth also have bounded rank-width, we have that isomorphism testing for graphs of bounded rank-width is $\textsf{L}$-hard under $\textsf{AC}^{0}$-reductions. We thus ask the following.

\begin{question}
Is isomorphism testing of graphs of bounded rank-width  $\textsf{L}$-complete?
\end{question}

It was already known that \textsc{GI} parameterized by rank-width was in $\textsf{XP}$ \cite{GroheSchweitzerRankWidth, grohe2019canonisation}. While our results improve the parallel complexity, they do not improve the parameterized complexity. 

\begin{question}
Does isomorphism testing of graphs parameterized by rank-width belong to $\textsf{FPT}$?
\end{question}

\bibliographystyle{alphaurl}
\bibliography{references}

\newcommand{\etalchar}[1]{$^{#1}$}
\begin{thebibliography}{GNSW20}

\bibitem[AB09]{AroraBarak}
Sanjeev Arora and Boaz Barak.
\newblock {\em Computational Complexity: A Modern Approach}.
\newblock 01 2009.
\newblock \href {https://doi.org/10.1017/CBO9780511804090}
  {\path{doi:10.1017/CBO9780511804090}}.

\bibitem[AK06]{ArvindKurur}
V.~Arvind and Piyush~P. Kurur.
\newblock Graph isomorphism is in {SPP}.
\newblock {\em Information and Computation}, 204(5):835--852, 2006.
\newblock \href {https://doi.org/10.1016/j.ic.2006.02.002}
  {\path{doi:10.1016/j.ic.2006.02.002}}.

\bibitem[Bab16]{BabaiGraphIso}
L\'{a}szl\'{o} Babai.
\newblock Graph isomorphism in quasipolynomial time [extended abstract].
\newblock In {\em S{TOC}'16---{P}roceedings of the 48th {A}nnual {ACM} {SIGACT}
  {S}ymposium on {T}heory of {C}omputing}, pages 684--697. ACM, New York, 2016.
\newblock Preprint of full version at \arXiv{1512.03547v2}{[cs.DS]}.
\newblock \href {https://doi.org/10.1145/2897518.2897542}
  {\path{doi:10.1145/2897518.2897542}}.

\bibitem[Bab19]{BabaiQuasipolynomialCanonization}
L\'{a}szl\'{o} Babai.
\newblock Canonical form for graphs in quasipolynomial time: Preliminary
  report.
\newblock In {\em Proceedings of the 51st Annual ACM SIGACT Symposium on Theory
  of Computing}, STOC 2019, page 1237–1246, New York, NY, USA, 2019.
  Association for Computing Machinery.
\newblock \href {https://doi.org/10.1145/3313276.3316356}
  {\path{doi:10.1145/3313276.3316356}}.

\bibitem[BES80]{BabaiErdosSelkow}
L\'{a}szl\'{o} Babai, Paul Erd\H{o}s, and Stanley~M. Selkow.
\newblock Random graph isomorphism.
\newblock {\em SIAM Journal on Computing}, 9(3):628--635, 1980.
\newblock \href {https://doi.org/10.1137/0209047} {\path{doi:10.1137/0209047}}.

\bibitem[BGM82]{BabaiGrigoryevMount}
L\'{a}szl\'{o} Babai, D.~Yu. Grigoryev, and David~M. Mount.
\newblock Isomorphism of graphs with bounded eigenvalue multiplicity.
\newblock In {\em Proceedings of the Fourteenth Annual ACM Symposium on Theory
  of Computing}, STOC '82, page 310–324, New York, NY, USA, 1982. Association
  for Computing Machinery.
\newblock \href {https://doi.org/10.1145/800070.802206}
  {\path{doi:10.1145/800070.802206}}.

\bibitem[BH92]{BuhrmanHomer}
Harry Buhrman and Steven Homer.
\newblock Superpolynomial circuits, almost sparse oracles and the exponential
  hierarchy.
\newblock In R.~K. Shyamasundar, editor, {\em Foundations of Software
  Technology and Theoretical Computer Science, 12th Conference, New Delhi,
  India, December 18-20, 1992, Proceedings}, volume 652 of {\em Lecture Notes
  in Computer Science}, pages 116--127. Springer, 1992.
\newblock \href {https://doi.org/10.1007/3-540-56287-7\_99}
  {\path{doi:10.1007/3-540-56287-7\_99}}.

\bibitem[BK79]{BabaiKucera}
L\'aszl\'o Babai and Ludik Kucera.
\newblock Canonical labelling of graphs in linear average time.
\newblock In {\em 20th Annual Symposium on Foundations of Computer Science
  ({SFCS} 1979)}, pages 39--46, 1979.
\newblock \href {https://doi.org/10.1109/SFCS.1979.8}
  {\path{doi:10.1109/SFCS.1979.8}}.

\bibitem[BKL83]{BabaiKantorLuksBoundedDegree}
L.~Babai, W.~M. Kantor, and E.~M. Luks.
\newblock Computational complexity and the classification of finite simple
  groups.
\newblock In {\em 24th Annual Symposium on Foundations of Computer Science
  (sfcs 1983)}, pages 162--171, 1983.
\newblock \href {https://doi.org/10.1109/SFCS.1983.10}
  {\path{doi:10.1109/SFCS.1983.10}}.

\bibitem[BL83]{BabaiLuksCanonicalLabeling}
L\'{a}szl\'{o} Babai and Eugene~M. Luks.
\newblock Canonical labeling of graphs.
\newblock In {\em Proceedings of the Fifteenth Annual ACM Symposium on Theory
  of Computing}, STOC '83, page 171–183, New York, NY, USA, 1983. Association
  for Computing Machinery.
\newblock \href {https://doi.org/10.1145/800061.808746}
  {\path{doi:10.1145/800061.808746}}.

\bibitem[BLS87]{BabaiLuksSeress}
L.~Babai, E.~Luks, and A.~Seress.
\newblock Permutation groups in {NC}.
\newblock In {\em STOC 1987}, STOC '87, pages 409--420, New York, NY, USA,
  1987. Association for Computing Machinery.
\newblock \href {https://doi.org/10.1145/28395.28439}
  {\path{doi:10.1145/28395.28439}}.

\bibitem[Bod89]{Bodlaender}
Hans~L. Bodlaender.
\newblock {NC}-algorithms for graphs with small treewidth.
\newblock In J.~van Leeuwen, editor, {\em Graph-Theoretic Concepts in Computer
  Science}, pages 1--10, Berlin, Heidelberg, 1989. Springer Berlin Heidelberg.
\newblock \href {https://doi.org/10.1007/3-540-50728-0_32}
  {\path{doi:10.1007/3-540-50728-0_32}}.

\bibitem[Bod90]{BodlaenderTreewidth}
Hans~L. Bodlaender.
\newblock Polynomial algorithms for graph isomorphism and chromatic index on
  partial k-trees.
\newblock {\em Journal of Algorithms}, 11(4):631--643, 1990.
\newblock \href {https://doi.org/10.1016/0196-6774(90)90013-5}
  {\path{doi:10.1016/0196-6774(90)90013-5}}.

\bibitem[Bol82]{BollobasRegular}
Béla Bollobás.
\newblock Distinguishing vertices of random graphs.
\newblock In Béla Bollobás, editor, {\em Graph Theory}, volume~62 of {\em
  North-Holland Mathematics Studies}, pages 33--49. North-Holland, 1982.
\newblock \href {https://doi.org/10.1016/S0304-0208(08)73545-X}
  {\path{doi:10.1016/S0304-0208(08)73545-X}}.

\bibitem[CFI92]{CFI}
Jin-Yi Cai, Martin F\"{u}rer, and Neil Immerman.
\newblock An optimal lower bound on the number of variables for graph
  identification.
\newblock {\em Combinatorica}, 12(4):389--410, 1992.
\newblock Originally appeared in SFCS '89.
\newblock \href {https://doi.org/10.1007/BF01305232}
  {\path{doi:10.1007/BF01305232}}.

\bibitem[CK07]{CourcelleKante2007}
Bruno Courcelle and Mamadou Kanté.
\newblock Graph operations characterizing rank-width and balanced graph
  expressions.
\newblock pages 66--75, 06 2007.
\newblock \href {https://doi.org/10.1007/978-3-540-74839-7_7}
  {\path{doi:10.1007/978-3-540-74839-7_7}}.

\bibitem[CLM{\etalchar{+}}13]{CurtisLinMcconnellNussbaum}
Andrew~R. Curtis, Min~Chih Lin, Ross~M. Mcconnell, Yahav Nussbaum,
  Francisco~Juan Soulignac, Jeremy~P. Spinrad, and Jayme~Luiz Szwarcfiter.
\newblock {Isomorphism of graph classes related to the circular-ones property}.
\newblock {\em {Discrete Mathematics \& Theoretical Computer Science}}, {Vol.
  15 no. 1}, March 2013.
\newblock URL: \url{https://dmtcs.episciences.org/625}, \href
  {https://doi.org/10.46298/dmtcs.625} {\path{doi:10.46298/dmtcs.625}}.

\bibitem[CO00]{CourcelleOlariu}
Bruno Courcelle and Stephan Olariu.
\newblock Upper bounds to the clique width of graphs.
\newblock {\em Discrete Applied Mathematics}, 101(1):77--114, 2000.
\newblock \href {https://doi.org/10.1016/S0166-218X(99)00184-5}
  {\path{doi:10.1016/S0166-218X(99)00184-5}}.

\bibitem[DDER18]{DasAnirbanMuraliReddy}
Bireswar Das, Anirban Dasgupta, Murali~Krishna Enduri, and I.~Vinod Reddy.
\newblock On nc algorithms for problems on bounded rank-width graphs.
\newblock {\em Information Processing Letters}, 139:64--67, 2018.
\newblock \href {https://doi.org/10.1016/j.ipl.2018.07.007}
  {\path{doi:10.1016/j.ipl.2018.07.007}}.

\bibitem[DLN{\etalchar{+}}09]{DattaLimayeNimbhorkarPrajaktaThieraufWagner}
Samir Datta, Nutan Limaye, Prajakta Nimbhorkar, Thomas Thierauf, and Fabian
  Wagner.
\newblock Planar graph isomorphism is in log-space.
\newblock In {\em 2009 24th Annual IEEE Conference on Computational
  Complexity}, pages 203--214, 2009.
\newblock \href {https://doi.org/10.1109/CCC.2009.16}
  {\path{doi:10.1109/CCC.2009.16}}.

\bibitem[DTW12]{DasToranWagner}
Bireswar Das, Jacobo Torán, and Fabian Wagner.
\newblock Restricted space algorithms for isomorphism on bounded treewidth
  graphs.
\newblock {\em Information and Computation}, 217:71--83, 2012.
\newblock \href {https://doi.org/10.1016/j.ic.2012.05.003}
  {\path{doi:10.1016/j.ic.2012.05.003}}.

\bibitem[EFT94]{Ebbinghaus:1994}
Heinz-Dieter Ebbinghaus, J.~Flum, and W.~Thomas.
\newblock {\em Mathematical Logic}.
\newblock Springer, 2 edition, 1994.
\newblock \href {https://doi.org/10.1007/978-1-4757-2355-7}
  {\path{doi:10.1007/978-1-4757-2355-7}}.

\bibitem[EK14]{ElberfeldKawarabayashiGenus}
Michael Elberfeld and Ken-ichi Kawarabayashi.
\newblock Embedding and canonizing graphs of bounded genus in logspace.
\newblock In {\em Proceedings of the Forty-Sixth Annual ACM Symposium on Theory
  of Computing}, STOC '14, page 383–392, New York, NY, USA, 2014. Association
  for Computing Machinery.
\newblock \href {https://doi.org/10.1145/2591796.2591865}
  {\path{doi:10.1145/2591796.2591865}}.

\bibitem[ES17]{ElberfeldSchweitzer}
Michael Elberfeld and Pascal Schweitzer.
\newblock Canonizing graphs of bounded tree width in logspace.
\newblock {\em ACM Trans. Comput. Theory}, 9(3), oct 2017.
\newblock \href {https://doi.org/10.1145/3132720} {\path{doi:10.1145/3132720}}.

\bibitem[FGS19]{FutornyGrochowSergeichukTensorWild}
Vyacheslav Futorny, Joshua~A. Grochow, and Vladimir~V. Sergeichuk.
\newblock Wildness for tensors.
\newblock {\em Lin. Alg. Appl.}, 566:212--244, 2019.
\newblock Preprint arXiv:1810.09219 [math.RT].
\newblock \href {https://doi.org/10.1016/j.laa.2018.12.022}
  {\path{doi:10.1016/j.laa.2018.12.022}}.

\bibitem[FM80]{FilottiMayer}
I.~S. Filotti and Jack~N. Mayer.
\newblock A polynomial-time algorithm for determining the isomorphism of graphs
  of fixed genus.
\newblock In {\em Proceedings of the Twelfth Annual ACM Symposium on Theory of
  Computing}, STOC '80, page 236–243, New York, NY, USA, 1980. Association
  for Computing Machinery.
\newblock \href {https://doi.org/10.1145/800141.804671}
  {\path{doi:10.1145/800141.804671}}.

\bibitem[FSS83]{FurerSchnyderSpecker}
Martin F\"{u}rer, Walter Schnyder, and Ernst Specker.
\newblock Normal forms for trivalent graphs and graphs of bounded valence.
\newblock In {\em Proceedings of the Fifteenth Annual ACM Symposium on Theory
  of Computing}, STOC '83, page 161–170, New York, NY, USA, 1983. Association
  for Computing Machinery.
\newblock \href {https://doi.org/10.1145/800061.808745}
  {\path{doi:10.1145/800061.808745}}.

\bibitem[GK19]{GroheKieferBoundedGenus}
Martin Grohe and Sandra Kiefer.
\newblock {A Linear Upper Bound on the Weisfeiler-Leman Dimension of Graphs of
  Bounded Genus}.
\newblock In Christel Baier, Ioannis Chatzigiannakis, Paola Flocchini, and
  Stefano Leonardi, editors, {\em 46th International Colloquium on Automata,
  Languages, and Programming (ICALP 2019)}, volume 132 of {\em Leibniz
  International Proceedings in Informatics (LIPIcs)}, pages 117:1--117:15,
  Dagstuhl, Germany, 2019. Schloss Dagstuhl--Leibniz-Zentrum fuer Informatik.
\newblock \href {https://doi.org/10.4230/LIPIcs.ICALP.2019.117}
  {\path{doi:10.4230/LIPIcs.ICALP.2019.117}}.

\bibitem[GK21]{GroheKieferPlanar}
Martin Grohe and Sandra Kiefer.
\newblock {Logarithmic Weisfeiler-Leman Identifies All Planar Graphs}.
\newblock In Nikhil Bansal, Emanuela Merelli, and James Worrell, editors, {\em
  48th International Colloquium on Automata, Languages, and Programming (ICALP
  2021)}, volume 198 of {\em Leibniz International Proceedings in Informatics
  (LIPIcs)}, pages 134:1--134:20, Dagstuhl, Germany, 2021. Schloss Dagstuhl --
  Leibniz-Zentrum f{\"u}r Informatik.
\newblock \href {https://doi.org/10.4230/LIPIcs.ICALP.2021.134}
  {\path{doi:10.4230/LIPIcs.ICALP.2021.134}}.

\bibitem[GM15]{GroheMarxTopologicalSubgraph}
Martin Grohe and D\'{a}niel Marx.
\newblock Structure theorem and isomorphism test for graphs with excluded
  topological subgraphs.
\newblock {\em SIAM Journal on Computing}, 44(1):114--159, 2015.
\newblock \href {https://doi.org/10.1137/120892234}
  {\path{doi:10.1137/120892234}}.

\bibitem[GN23]{grohe2019canonisation}
Martin Grohe and Daniel Neuen.
\newblock Canonisation and definability for graphs of bounded rank width.
\newblock {\em {ACM} Trans. Comput. Log.}, 24(1):6:1--6:31, 2023.
\newblock \href {https://doi.org/10.1145/3568025} {\path{doi:10.1145/3568025}}.

\bibitem[GNS0]{GroheNeuenSchweitzerBoundedDegree}
Martin Grohe, Daniel Neuen, and Pascal Schweitzer.
\newblock A faster isomorphism test for graphs of small degree.
\newblock {\em SIAM Journal on Computing}, 0(0):FOCS18--1--FOCS18--36, 0.
\newblock \href {https://doi.org/10.1137/19M1245293}
  {\path{doi:10.1137/19M1245293}}.

\bibitem[GNSW20]{GroheNeuenSchweitzerWiebkingTreewidth}
Martin Grohe, Daniel Neuen, Pascal Schweitzer, and Daniel Wiebking.
\newblock An improved isomorphism test for bounded-tree-width graphs.
\newblock {\em ACM Trans. Algorithms}, 16(3), 06 2020.
\newblock \href {https://doi.org/10.1145/3382082} {\path{doi:10.1145/3382082}}.

\bibitem[GQ19]{GrochowQiaoTensors}
Joshua~A. Grochow and Youming Qiao.
\newblock Isomorphism problems for tensors, groups, and cubic forms:
  completeness and reductions.
\newblock arXiv:1907.00309 [cs.CC], 2019.
\newblock \href {https://doi.org/10.48550/ARXIV.1907.00309}
  {\path{doi:10.48550/ARXIV.1907.00309}}.

\bibitem[Gro00]{GroheBoundedGenus}
Martin Grohe.
\newblock Isomorphism testing for embeddable graphs through definability.
\newblock In {\em Proceedings of the Thirty-Second Annual ACM Symposium on
  Theory of Computing}, STOC '00, page 63–72, New York, NY, USA, 2000.
  Association for Computing Machinery.
\newblock \href {https://doi.org/10.1145/335305.335313}
  {\path{doi:10.1145/335305.335313}}.

\bibitem[Gro12]{GroheForbiddenMinor}
Martin Grohe.
\newblock Fixed-point definability and polynomial time on graphs with excluded
  minors.
\newblock {\em J. ACM}, 59(5), nov 2012.
\newblock \href {https://doi.org/10.1145/2371656.2371662}
  {\path{doi:10.1145/2371656.2371662}}.

\bibitem[Gro17]{GroheBook}
Martin Grohe.
\newblock {\em Descriptive complexity, canonisation, and definable graph
  structure theory}, volume~47 of {\em Lecture Notes in Logic}.
\newblock Association for Symbolic Logic, Ithaca, NY; Cambridge University
  Press, Cambridge, 2017.
\newblock \href {https://doi.org/10.1017/9781139028868}
  {\path{doi:10.1017/9781139028868}}.

\bibitem[GS15]{GroheSchweitzerRankWidth}
Martin Grohe and Pascal Schweitzer.
\newblock Isomorphism testing for graphs of bounded rank width.
\newblock In {\em 2015 IEEE 56th Annual Symposium on Foundations of Computer
  Science}, pages 1010--1029, 2015.
\newblock \href {https://doi.org/10.1109/FOCS.2015.66}
  {\path{doi:10.1109/FOCS.2015.66}}.

\bibitem[GV06]{GroheVerbitsky}
Martin Grohe and Oleg Verbitsky.
\newblock Testing graph isomorphism in parallel by playing a game.
\newblock In Michele Bugliesi, Bart Preneel, Vladimiro Sassone, and Ingo
  Wegener, editors, {\em Automata, Languages and Programming, 33rd
  International Colloquium, {ICALP} 2006, Venice, Italy, July 10-14, 2006,
  Proceedings, Part {I}}, volume 4051 of {\em Lecture Notes in Computer
  Science}, pages 3--14. Springer, 2006.
\newblock \href {https://doi.org/10.1007/11786986_2}
  {\path{doi:10.1007/11786986_2}}.

\bibitem[Hel89]{Hella1989}
Lauri Hella.
\newblock Definability hierarchies of generalized quantifiers.
\newblock {\em Annals of Pure and Applied Logic}, 43(3):235 -- 271, 1989.
\newblock \href {https://doi.org/10.1016/0168-0072(89)90070-5}
  {\path{doi:10.1016/0168-0072(89)90070-5}}.

\bibitem[Hel96]{Hella1993}
Lauri Hella.
\newblock Logical hierarchies in {PTIME}.
\newblock {\em Information and Computation}, 129(1):1--19, 1996.
\newblock \href {https://doi.org/10.1006/inco.1996.0070}
  {\path{doi:10.1006/inco.1996.0070}}.

\bibitem[HW74]{HopcroftWong}
J.~E. Hopcroft and J.~K. Wong.
\newblock Linear time algorithm for isomorphism of planar graphs (preliminary
  report).
\newblock In {\em Proceedings of the Sixth Annual ACM Symposium on Theory of
  Computing}, STOC '74, page 172–184, New York, NY, USA, 1974. Association
  for Computing Machinery.
\newblock \href {https://doi.org/10.1145/800119.803896}
  {\path{doi:10.1145/800119.803896}}.

\bibitem[IL90]{ImmermanLander1990}
Neil Immerman and Eric Lander.
\newblock Describing graphs: A first-order approach to graph canonization.
\newblock In Alan~L. Selman, editor, {\em Complexity Theory Retrospective: In
  Honor of Juris Hartmanis on the Occasion of His Sixtieth Birthday, July 5,
  1988}, pages 59--81. Springer New York, New York, NY, 1990.
\newblock \href {https://doi.org/10.1007/978-1-4612-4478-3_5}
  {\path{doi:10.1007/978-1-4612-4478-3_5}}.

\bibitem[iOS06]{OumSeymour2006}
Sang il~Oum and Paul Seymour.
\newblock Approximating clique-width and branch-width.
\newblock {\em Journal of Combinatorial Theory, Series B}, 96(4):514--528,
  2006.
\newblock \href {https://doi.org/10.1016/j.jctb.2005.10.006}
  {\path{doi:10.1016/j.jctb.2005.10.006}}.

\bibitem[IPZ01]{ETH}
Russell Impagliazzo, Ramamohan Paturi, and Francis Zane.
\newblock Which problems have strongly exponential complexity?
\newblock {\em Journal of Computer and System Sciences}, 63(4):512--530, 2001.
\newblock \href {https://doi.org/10.1006/jcss.2001.1774}
  {\path{doi:10.1006/jcss.2001.1774}}.

\bibitem[KKLV11]{IntervalGILogspace}
Johannes K\"{o}bler, Sebastian Kuhnert, Bastian Laubner, and Oleg Verbitsky.
\newblock Interval graphs: Canonical representations in logspace.
\newblock {\em SIAM Journal on Computing}, 40(5):1292--1315, 2011.
\newblock \href {https://doi.org/10.1137/10080395X}
  {\path{doi:10.1137/10080395X}}.

\bibitem[KKV16]{KOBLER2016266}
Johannes Köbler, Sebastian Kuhnert, and Oleg Verbitsky.
\newblock On the isomorphism problem for helly circular-arc graphs.
\newblock {\em Information and Computation}, 247:266--277, 2016.
\newblock \href {https://doi.org/10.1016/j.ic.2016.01.006}
  {\path{doi:10.1016/j.ic.2016.01.006}}.

\bibitem[KPS19]{KieferPonomarenkoSchweitzer}
Sandra Kiefer, Ilia Ponomarenko, and Pascal Schweitzer.
\newblock The {Weisfeiler--Leman} dimension of planar graphs is at most 3.
\newblock {\em J. ACM}, 66(6), November 2019.
\newblock \href {https://doi.org/10.1145/3333003} {\path{doi:10.1145/3333003}}.

\bibitem[KS17]{KRATSCH2017240}
Stefan Kratsch and Pascal Schweitzer.
\newblock Graph isomorphism for graph classes characterized by two forbidden
  induced subgraphs.
\newblock {\em Discrete Applied Mathematics}, 216:240--253, 2017.
\newblock Special Graph Classes and Algorithms — in Honor of Professor
  Andreas Brandstädt on the Occasion of His 65th Birthday.
\newblock \href {https://doi.org/10.1016/j.dam.2014.10.026}
  {\path{doi:10.1016/j.dam.2014.10.026}}.

\bibitem[KST92]{GILowPP}
Johannes K{\"{o}}bler, Uwe Sch{\"{o}}ning, and Jacobo Tor{\'{a}}n.
\newblock Graph isomorphism is low for {PP}.
\newblock {\em Comput. Complex.}, 2:301--330, 1992.
\newblock \href {https://doi.org/10.1007/BF01200427}
  {\path{doi:10.1007/BF01200427}}.

\bibitem[Kuc87]{KuceraRegular}
Ludek Kucera.
\newblock Canonical labeling of regular graphs in linear average time.
\newblock SFCS '87, page 271–279, USA, 1987. IEEE Computer Society.
\newblock \href {https://doi.org/10.1109/SFCS.1987.11}
  {\path{doi:10.1109/SFCS.1987.11}}.

\bibitem[KV08]{KoblerVerbitsky}
Johannes K{\"o}bler and Oleg Verbitsky.
\newblock From invariants to canonization in parallel.
\newblock In Edward~A. Hirsch, Alexander~A. Razborov, Alexei Semenov, and
  Anatol Slissenko, editors, {\em Computer Science -- Theory and Applications},
  pages 216--227, Berlin, Heidelberg, 2008. Springer Berlin Heidelberg.
\newblock \href {https://doi.org/10.1007/978-3-540-79709-8_23}
  {\path{doi:10.1007/978-3-540-79709-8_23}}.

\bibitem[Lad75]{Ladner}
Richard~E. Ladner.
\newblock On the structure of polynomial time reducibility.
\newblock {\em J. ACM}, 22(1):155–171, January 1975.
\newblock \href {https://doi.org/10.1145/321864.321877}
  {\path{doi:10.1145/321864.321877}}.

\bibitem[Lib04]{Libkin}
Leonid Libkin.
\newblock {\em Elements of Finite Model Theory}.
\newblock Springer, 2004.
\newblock \href {https://doi.org/10.1007/978-3-662-07003-1_1}
  {\path{doi:10.1007/978-3-662-07003-1_1}}.

\bibitem[LPPS17]{LokshtanovTreewidth}
Daniel Lokshtanov, Marcin Pilipczuk, Micha\l{} Pilipczuk, and Saket Saurabh.
\newblock Fixed-parameter tractable canonization and isomorphism test for
  graphs of bounded treewidth.
\newblock {\em SIAM Journal on Computing}, 46(1):161--189, 2017.
\newblock \href {https://doi.org/10.1137/140999980}
  {\path{doi:10.1137/140999980}}.

\bibitem[Luk82]{LUKS198242}
Eugene~M. Luks.
\newblock Isomorphism of graphs of bounded valence can be tested in polynomial
  time.
\newblock {\em Journal of Computer and System Sciences}, 25(1):42--65, 1982.
\newblock \href {https://doi.org/10.1016/0022-0000(82)90009-5}
  {\path{doi:10.1016/0022-0000(82)90009-5}}.

\bibitem[Mat79]{MATHON1979131}
Rudolf Mathon.
\newblock A note on the graph isomorphism counting problem.
\newblock {\em Information Processing Letters}, 8(3):131--136, 1979.
\newblock \href {https://doi.org/10.1016/0020-0190(79)90004-8}
  {\path{doi:10.1016/0020-0190(79)90004-8}}.

\bibitem[Mil80]{MillerGIBoundedGenus}
Gary Miller.
\newblock Isomorphism testing for graphs of bounded genus.
\newblock In {\em Proceedings of the Twelfth Annual ACM Symposium on Theory of
  Computing}, STOC '80, page 225–235, New York, NY, USA, 1980. Association
  for Computing Machinery.
\newblock \href {https://doi.org/10.1145/800141.804670}
  {\path{doi:10.1145/800141.804670}}.

\bibitem[NS18]{NeuenSchweitzerIR}
Daniel Neuen and Pascal Schweitzer.
\newblock An exponential lower bound for individualization-refinement
  algorithms for graph isomorphism.
\newblock In Ilias Diakonikolas, David Kempe, and Monika Henzinger, editors,
  {\em Proceedings of the 50th Annual {ACM} {SIGACT} Symposium on Theory of
  Computing, {STOC} 2018, Los Angeles, CA, USA, June 25-29, 2018}, pages
  138--150. {ACM}, 2018.
\newblock \href {https://doi.org/10.1145/3188745.3188900}
  {\path{doi:10.1145/3188745.3188900}}.

\bibitem[Oum08]{Oum2008}
Sang-il Oum.
\newblock Rank-width is less than or equal to branch-width.
\newblock {\em Journal of Graph Theory}, 57(3):239--244, 2008.
\newblock \href {https://doi.org/10.1002/jgt.20280}
  {\path{doi:10.1002/jgt.20280}}.

\bibitem[Pon91]{PonomarenkoMinor}
Ilia Ponomarenko.
\newblock The isomorphism problem for classes of graphs closed under
  contraction.
\newblock {\em Journal of Mathematical Sciences}, 55:1621--1643, 06 1991.
\newblock \href {https://doi.org/10.1007/BF01098279}
  {\path{doi:10.1007/BF01098279}}.

\bibitem[Rei08]{Reingold}
Omer Reingold.
\newblock Undirected connectivity in log-space.
\newblock {\em J. ACM}, 55(4), sep 2008.
\newblock \href {https://doi.org/10.1145/1391289.1391291}
  {\path{doi:10.1145/1391289.1391291}}.

\bibitem[Sch88]{Schoning}
Uwe Sch{\"{o}}ning.
\newblock Graph isomorphism is in the low hierarchy.
\newblock {\em Journal of Computer and System Sciences}, 37(3):312 -- 323,
  1988.
\newblock \href {https://doi.org/10.1016/0022-0000(88)90010-4}
  {\path{doi:10.1016/0022-0000(88)90010-4}}.

\bibitem[Sch17]{SchweitzerDichotomy}
Pascal Schweitzer.
\newblock Towards an isomorphism dichotomy for hereditary graph classes.
\newblock {\em Theory Comput. Syst.}, 61(4):1084--1127, 2017.
\newblock \href {https://doi.org/10.1007/S00224-017-9775-8}
  {\path{doi:10.1007/S00224-017-9775-8}}.

\bibitem[Smo87]{Smolensky87algebraicmethods}
Roman Smolensky.
\newblock Algebraic methods in the theory of lower bounds for boolean circuit
  complexity.
\newblock In Alfred~V. Aho, editor, {\em Proceedings of the 19th Annual {ACM}
  Symposium on Theory of Computing, 1987, New York, New York, {USA}}, pages
  77--82. {ACM}, 1987.
\newblock \href {https://doi.org/10.1145/28395.28404}
  {\path{doi:10.1145/28395.28404}}.

\bibitem[Tor04]{Toran}
Jacobo Tor{\'{a}}n.
\newblock On the hardness of graph isomorphism.
\newblock {\em {SIAM} J. Comput.}, 33(5):1093--1108, 2004.
\newblock \href {https://doi.org/10.1137/S009753970241096X}
  {\path{doi:10.1137/S009753970241096X}}.

\bibitem[TW10]{ThieraufWagnerPlanarUL}
Thomas Thierauf and Fabian Wagner.
\newblock The isomorphism problem for planar 3-connected graphs is in
  unambiguous logspace.
\newblock {\em Theory Comput. Syst.}, 47(3):655--673, 2010.
\newblock \href {https://doi.org/10.1007/S00224-009-9188-4}
  {\path{doi:10.1007/S00224-009-9188-4}}.

\bibitem[Ver07]{VerbitskyPlanar}
Oleg Verbitsky.
\newblock Planar graphs: Logical complexity and parallel isomorphism tests.
\newblock STACS'07, page 682–693, Berlin, Heidelberg, 2007. Springer-Verlag.
\newblock \href {https://doi.org/10.5555/1763424.1763505}
  {\path{doi:10.5555/1763424.1763505}}.

\bibitem[Vol99]{VollmerText}
Heribert Vollmer.
\newblock {\em Introduction to Circuit Complexity - {A} Uniform Approach}.
\newblock Texts in Theoretical Computer Science. An {EATCS} Series. Springer,
  1999.
\newblock \href {https://doi.org/10.1007/978-3-662-03927-4}
  {\path{doi:10.1007/978-3-662-03927-4}}.

\bibitem[Wag11]{WagnerBoundedTreewidth}
Fabian Wagner.
\newblock Graphs of bounded treewidth can be canonized in {$\textsf{AC}^1$}.
\newblock In {\em Proceedings of the 6th International Conference on Computer
  Science: Theory and Applications}, CSR'11, page 209–222, Berlin,
  Heidelberg, 2011. Springer-Verlag.
\newblock \href {https://doi.org/10.1007/978-3-642-20712-9_16}
  {\path{doi:10.1007/978-3-642-20712-9_16}}.

\bibitem[Zoo]{ComplexityZoo}
Complexity zoo.
\newblock URL: \url{https://complexityzoo.net}.

\end{thebibliography}

\end{document}